\declaretheorem[name=Theorem]{thm}
\declaretheorem[name=Lemma]{lem}
\declaretheorem[name=Proposition]{prop}
\newtheorem{definition}{Definition}
\newcommand{\ddnnf}{$\mathsf{d}$-$\mathsf{DNNF}$}
\newcommand{\wsts}{\mathsf{wSTS}}
\newcommand{\wunigen}{\mathsf{wUnigen3}}
\newcommand{\barbariktwo}{\mathsf{Barbarik2}}
\newcommand{\barbarikthree}{\mathsf{Barbarik3}}
\newcommand{\accept}{\mathsf{Accept}}
\newcommand{\reject}{\mathsf{Reject}}
\newcommand{\insidebucket}{\mathsf{InBucket}}
\newcommand{\outsidebucket}{\mathsf{OutBucket}} 
\newcommand{\waps}{$\mathsf{WAPS}$}
\newcommand{\samp}{$\mathsf{SAMP}$}
\newcommand{\eval}{$\mathsf{EVAL}$}
\newcommand{\bias}{$\mathsf{Bias}$}
\newcommand{\pcond}{$\mathsf{PCOND}$}
\newcommand{\cond}{$\mathsf{COND}$}
\newcommand{\dual}{$\mathsf{DUAL}$}
\newcommand{\expect}{\mathbb{E}}
\newcommand{\bn}{\{0,1\}^n}
\title{On Scalable Testing of Samplers \thanks{The accompanying tool, available open source, can be found at \href{https://github.com/meelgroup/barbarik}{https://github.com/meelgroup/barbarik}} \thanks{The authors decided to forgo the old convention of alphabetical ordering of authors in favor of a randomized ordering, denoted by \textcircled{r}. The publicly verifiable record of the randomization is available at \protect\url{https://www.aeaweb.org/journals/policies/random-author-order/search} with confirmation code: Lrr1ecP-xv14. For citations, the authors request that the citation guidelines by AEA for random author ordering be followed.}}
\author{%
	Yash Pote \quad \textcircled{r} \quad    Kuldeep S. Meel \\
	School of Computing, National University of Singapore
}
\begin{document}

\maketitle

\begin{abstract}
In this paper we study the problem of testing of constrained samplers over high-dimensional distributions with $(\varepsilon,\eta,\delta)$ guarantees. Samplers are increasingly used in a wide range of safety-critical ML applications, and hence the testing problem has gained importance. For $n$-dimensional distributions, the existing state-of-the-art algorithm, $\mathsf{Barbarik2}$, has a worst case query complexity of exponential in $n$ and hence is not ideal for use in practice. Our primary contribution is an exponentially faster algorithm that has a query complexity linear in $n$ and hence can easily scale to larger instances. We demonstrate our claim by implementing our algorithm and then comparing it against $\mathsf{Barbarik2}$. Our experiments on the samplers $\mathsf{wUnigen3}$ and $\mathsf{wSTS}$, find that $\mathsf{Barbarik3}$ requires $10\times$ fewer samples for $\mathsf{wUnigen3}$ and $450\times$ fewer samples for $\mathsf{wSTS}$ as compared to $\mathsf{Barbarik2}$. 
\end{abstract}

\section{Introduction}\label{introduction}

The constrained sampling problem is to draw samples from high-dimensional distributions over constrained spaces.  A constrained sampler $\mathcal{Q}(\varphi, \mathtt{w})$ takes in a set of constraints $\varphi:\{0,1\}^n \rightarrow \{0,1\}$ and a weight function $\mathtt{w}:\{0,1\}^n \rightarrow \mathbb{R}_{>0}$, and returns a sample $\sigma \in \varphi^{-1}(1) $ with probability proportional to $\mathtt{w}(\sigma)$. Constrained sampling is a core primitive of  many statistical inference methods used in ML, such as Sequential Monte Carlo\cite{NLS19}, Markov Chain Monte Carlo(MCMC)\cite{ADDJ+03,BGJ11}, Simulated Annealing~\cite{ADD13}, and Variational Inference~\cite{HBWP13}. Sampling from real-world distributions is often computationally intractable, and hence, in practice, samplers are heuristical and lack theoretical guarantees. For such samplers, it is an important problem to determine whether the sampled distribution is close to the desired distribution, and this problem is known as \emph{testing of samplers}. The problem was formalised in~\cite{CM19,MPC20} as follows: Given access to a target distribution $\mathcal{P}$, a sampler $\mathcal{Q}(\varphi, \mathtt{w})$, and three parameters $(\varepsilon,\eta,\delta)$, with probability at least $1-\delta$, return (1) $\accept$ if $d_{\infty}(\mathcal{P},\mathcal{Q}(\varphi, \mathtt{w})) < \varepsilon$, or (2) $\reject$ if  $d_{TV}(\mathcal{P},\mathcal{Q}(\varphi, \mathtt{w})) > \eta$. Here $d_{TV}$ is the total variation distance, $d_{\infty}$ the multiplicative distance, and $\varepsilon, \eta,$ and $\delta$ are parameters for closeness, farness and confidence respectively. Access to distribution $\mathcal{P}$ is via the {\dual} oracle, and access to $\mathcal{Q}$ is via the {\pcond} and {\samp} oracles (defined in Section~\ref{sec:oracles})

There is substantial interest in the testing problem due to the increasing use of ML systems in real-world applications where safety is essential, such as medicine~\cite{ALPE+13}, transportation~\cite{BDDD16,JKO19}, and warfare~\cite{MBLA+20}. 
For the ML systems that incorporate samplers, the typical testing approach has been to show the convergence of the sampler with the target distribution via empirical tests that rely on heuristics and do not provide any guarantees~\cite{CC96,GD92,RCC99,VGSC+21}. %
In a recent work~\cite{MPC20}, a novel framework, called $\barbariktwo$, was proposed that could test a given sampler while providing $(\varepsilon,\eta,\delta)$ guarantees, using  $\tilde{O}\left(\frac{tilt(\mathcal{P})^2}{\eta(\eta-3\varepsilon)^3}\right)$ queries, where $tilt(\mathcal{P}) := \underset{\sigma_1,\sigma_2 \in \bn}{\max}\frac{\mathcal{P}(\sigma_1)}{\mathcal{P}(\sigma_2)}$ for $\mathcal{P}(\sigma_2)>0$. Since the $tilt(\mathcal{P})$ can  take arbitrary values, we observe that the query complexity can be prohibitively large\footnote{A simple modification reveals that in terms of $n,\eta,\varepsilon$, the bound is $\tilde{O}\left(\frac{4^n}{\eta(\eta-3\varepsilon)^3}\right)$}. On the other hand, the best known lower bound for the problem, derived from~\cite{N21}, is $\tilde{\Omega}\left({\frac{\sqrt{n/\log(n)}}{ \eta^2}}\right)$. In this work, we take a step towards bridging this gap with our algorithm, $\barbarikthree$, that has a query complexity of $\tilde{O}\left(\frac{\sqrt{n}\log  n}{(\eta - 11.6\varepsilon)\eta^3} + \frac{ n}{\eta^{2}}\right)$, representing an exponential improvement over the state of the art.

To be of any real value, testing tools must be able to scale to larger instances. In the case of constrained samplers, the only existing testing tool, $\barbariktwo$, is not scalable owing to its query complexity. The lack of scalability is illustrated by the following fact: product distributions are the simplest possible constrained distributions, and given a union of two $n$-dimensional product distributions, $\barbariktwo$ requires more than $10^8$ queries for $n > 30$. On the other hand, the query complexity of $\barbarikthree$ scales linearly with  $n$, the number of dimensions, thus making it more appropriate for practical use. 

We implement $\barbarikthree$ and compare it against $\barbariktwo$ to determine their relative performance. In our experiments, we consider two sets of problems, (1) constrained sampling benchmarks, (2) scalable benchmarks and two constrained samplers $\wsts$ and $\wunigen$. We found that to complete the test $\barbarikthree$ required at least $450\times$ fewer samples from $\wsts$ and $10\times$ fewer samples from $\wunigen$ as compared to $\barbariktwo$. Moreover, $\barbarikthree$ terminates with a result on at least 3$\times$ more benchmarks than $\barbariktwo$ in each experiment.

Our contributions can be summarized as follows:
\begin{enumerate}
	\item For the problem of testing of samplers, we provide an exponential improvement in query complexity over the current state of the art test $\barbariktwo$. Our test, $\barbarikthree$, makes a total of $		\tilde{O}\left(\frac{\sqrt{ n
		}\log n}{(\eta - 11.6\varepsilon)\eta^3} + \frac{ n}{\eta^{2}}\right)
	$ queries, where $\tilde{O}$ hides polylog factors of $\varepsilon, \eta$ and $\delta$. 
	
	\item We present an extensive empirical evaluation of $\barbarikthree$ and contrast it with $\barbariktwo$. The results indicate that $\barbarikthree$ requires far fewer samples and terminates on more benchmarks when compared to $\barbariktwo$. 
\end{enumerate}

We define the notation and discuss related work in Section~\ref{sec:preliminaries}. We then present the main contribution of the paper, the test $\barbarikthree$, and its proof of correctness in Section~\ref{sec:theory}. We present our experimental findings in Section~\ref{sec:evaluation}
and then 
we conclude the paper and discuss some open problems in Section~\ref{sec:conclusion}. Due to space constraints, we defer some proofs and the full experimental results to the supplementary Section~\ref{sec:missingproof} and ~\ref{sec:missingdata} respectively. 
\section{Notation and preliminaries}\label{sec:preliminaries}
\paragraph{Probability distributions} In this paper we deal with samplers that sample from discrete probability distributions over high-dimensional spaces. We consider the sample space to be the $n$-dimensional Boolean hypercube $\bn$. A constrained sampler $\mathcal{Q}$ takes in a set of constraints $\varphi:\bn \rightarrow \{0,1\}$ and a weight function $\mathtt{w}:\bn \rightarrow \mathbb{R}_{>0}$, and samples from the distribution $\mathcal{Q}(\varphi,\mathtt{w}) $ defined as $\mathcal{Q}(\varphi,\mathtt{w})(\sigma) = 
\begin{cases}
	\mathtt{w}(\sigma)/\mathtt{w}(\varphi) &\sigma \in \varphi^{-1}(1)\\
	0&\sigma \in \varphi^{-1}(0)
\end{cases}$, where $\mathtt{w}(\varphi) = \sum_{\sigma \in \varphi^{-1}(1)} \mathtt{w}(\sigma)$. To improve readability, we use $\mathcal{Q}$ to refer to the distribution $\mathcal{Q}(\varphi, \mathtt{w})$. For an element $i$, $\mathcal{D}(i)$ denotes it's probability in distribution $\mathcal{D}$ and  $i \sim \mathcal{D}$ represents that $i$ is sampled from $\mathcal{D}$. For any non-empty set $S \subseteq \bn$,  $\mathcal{D}_{S}$  is the distribution $\mathcal{D}$ conditioned on set $S$, and $\mathcal{D}(S)$ is the probability of $S$ in $\mathcal{D}$ i.e., $\mathcal{D}(S) = \sum_{i \in S}\mathcal{D}(i)$.

The total variation (TV) distance  of two probability distributions $\mathcal{D}_1$ and $\mathcal{D}_2$ is defined as: $d_{TV}(\mathcal{D}_1,\mathcal{D}_2)= \frac{1}{2}\sum_{i \in \bn} |\mathcal{D}_1(i)-\mathcal{D}_2(i)|$.  For $S \subseteq \bn$, we define $d_{TV(S)}(\mathcal{D}_1,\mathcal{D}_2)= \frac{1}{2}\sum_{i \in S} |\mathcal{D}_1(i)-\mathcal{D}_2(i)|$. The multiplicative distance of $\mathcal{D}_2$ from $\mathcal{D}_1$ is defined as: $d_{\infty}(\mathcal{D}_1,\mathcal{D}_2) = \max_{i \in \bn} |\mathcal{D}_2(i)/\mathcal{D}_1(i) - 1 |$. The two notions of distance obey the identity: $2d_{TV}(\mathcal{D}_1,\mathcal{D}_2)\leq d_{\infty}(\mathcal{D}_1,\mathcal{D}_2)$.

In the rest of the paper, $\expect[v]$ represents the expectation of random variable $v$ and $[k]$ represents the set $\{1,2\ldots,k\}$.

\paragraph{Tools used in the analysis}
\begin{prop}[Hoeffding]\label{prop:hoeffding}
	For independent 0-1 random variables $X_i$,  $X = \sum_{i=1}^k X_i$,  and $t\geq 0$, $\Pr(X-\expect X > t) \leq \exp\left(- 2t^2k\right)$ and $\Pr(\expect X- X > t) \leq \exp\left(- 2t^2k\right)$
\end{prop}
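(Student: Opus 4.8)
The plan is to prove both inequalities by the standard Chernoff (exponential-moment) method, obtaining the lower-tail bound from the upper-tail bound by a simple symmetry, so that essentially all of the work lies in bounding $\Pr(X - \expect X > t)$.

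First I would fix a parameter $\lambda > 0$ and apply Markov's inequality to the nonnegative random variable $e^{\lambda(X - \expect X)}$:
\[
\Pr(X - \expect X > t) = \Pr\!\left(e^{\lambda(X-\expect X)} > e^{\lambda t}\right) \le e^{-\lambda t}\,\expect\!\left[e^{\lambda(X-\expect X)}\right].
\]
Reading $X$ as the empirical mean $\frac1k\sum_{i=1}^k X_i$ (which is what the stated bound requires), independence of the $X_i$ lets the moment generating function factorize as $\prod_{i=1}^k \expect[e^{(\lambda/k)(X_i - \expect X_i)}]$. The next step is to control each factor via \emph{Hoeffding's lemma}: any random variable $Y$ supported on an interval of length $1$ with $\expect Y = 0$ satisfies $\expect[e^{\mu Y}] \le e^{\mu^2/8}$. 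Applying this with $Y = X_i - \expect X_i$ and $\mu = \lambda/k$, and multiplying over $i$, produces a bound of the shape $\exp\!\left(-\lambda t + \lambda^2/(8k)\right)$.

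Finally I would optimize the free parameter: the exponent $-\lambda t + \lambda^2/(8k)$ is a convex quadratic in $\lambda$, minimized at $\lambda^\star = 4kt$, and substituting it back yields precisely $\exp(-2t^2 k)$, as claimed. The lower tail $\Pr(\expect X - X > t) \le \exp(-2t^2 k)$ then follows at once by applying the upper-tail bound to the independent $0$-$1$ variables $1 - X_i$, whose centered empirical average is exactly $\expect X - X$.

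The main obstacle is Hoeffding's lemma itself, since everything else is routine. The standard argument introduces $\psi(\mu) := \log\expect[e^{\mu Y}]$, notes $\psi(0)=0$ and $\psi'(0) = \expect Y = 0$, and shows $\psi''(\mu) \le 1/4$ uniformly in $\mu$ by recognizing $\psi''(\mu)$ as the variance of $Y$ under the exponentially tilted law $dQ \propto e^{\mu Y}\,dP$ --- a random variable confined to an interval of length $1$, hence of variance at most $1/4$. A second-order Taylor expansion of $\psi$ around $0$ then gives $\psi(\mu) \le \mu^2/8$. The only mild technical point is justifying differentiation under the expectation, which is immediate because $Y$ is bounded.
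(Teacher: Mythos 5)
Your proposal is correct: it is the standard Chernoff--Hoeffding argument (Markov applied to $e^{\lambda(X-\expect X)}$, factorization by independence, Hoeffding's lemma $\expect[e^{\mu Y}]\le e^{\mu^2/8}$ for a mean-zero variable on an interval of length $1$, optimization at $\lambda^\star=4kt$, and the lower tail by passing to $1-X_i$), and the optimization indeed yields $\exp(-2t^2k)$ under your reading. There is nothing to compare it against in the paper itself, since Proposition~\ref{prop:hoeffding} is invoked as a known tool without proof. One point worth making explicit, which you correctly flagged: as literally stated, with $X=\sum_{i=1}^k X_i$ the sum rather than the empirical mean, the bound $\exp(-2t^2k)$ is false in general (the correct exponent for the sum is $-2t^2/k$); your bound holds for $X=\frac1k\sum_{i=1}^k X_i$, and this is evidently the intended reading, since the paper applies the proposition to the empirical frequency $\widehat{c_{pq}}$ built from $r$ conditional samples, where $r=\lceil 2\ln(4mt/\delta)/(h-\ell)^2\rceil$ together with deviation $t=(h-\ell)/2$ gives exactly $\exp(-2t^2r)\le \delta/(4mt)$ as claimed on Line~\ref{line:r} of Algorithm~\ref{alg:insidebucket}. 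So your proof establishes the statement in the form the paper actually uses; only the normalization in the proposition's phrasing is off.
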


\begin{prop}[Chebyshev]\label{prop:cheby}
	Given bounded r.v. $X$, we have   $\Pr(|X - \expect[X] | < \expect[X])> \frac{\expect[X]^2}{\expect[X^2]}$
\end{prop}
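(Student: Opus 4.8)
The plan is to rewrite the event $\{\,|X-\expect[X]|<\expect[X]\,\}$ as a one‑sided tail event and then control that tail by a second‑moment inequality. Write $\mu=\expect[X]$ and $\sigma^2=\variance[X]=\expect[X^2]-\mu^2$, both finite since $X$ is bounded; I may assume $\mu>0$, as otherwise the claimed bound is vacuous. The event $|X-\mu|<\mu$ is exactly $0<X<2\mu$, and since the variables to which this proposition is applied are (almost surely) positive, this coincides up to a null set with the one‑sided event $\{X<2\mu\}$. Hence $\Pr(|X-\mu|<\mu)=1-\Pr(X\ge 2\mu)$, and it suffices to show $\Pr(X\ge 2\mu)\le 1-\expect[X]^2/\expect[X^2]$.

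Next I would bound the upper tail $\Pr(X\ge 2\mu)=\Pr(X-\mu\ge \mu)$ by the one‑sided Chebyshev (Cantelli) inequality. To keep things self‑contained I would prove the needed instance directly: pick $a=\sigma^2/\mu>0$ and apply Markov's inequality to the nonnegative variable $(X-\mu+a)^2$, using that $\{X\ge 2\mu\}\subseteq\{(X-\mu+a)^2\ge(\mu+a)^2\}$. Since $\expect[(X-\mu+a)^2]=\sigma^2+a^2$ (the cross term vanishes because $\expect[X-\mu]=0$), this yields $\Pr(X\ge 2\mu)\le(\sigma^2+a^2)/(\mu+a)^2$, and substituting $a=\sigma^2/\mu$ simplifies the right‑hand side to $\sigma^2/(\sigma^2+\mu^2)$.

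Finally I would substitute $\expect[X^2]=\mu^2+\sigma^2$: the tail bound becomes $\Pr(X\ge 2\mu)\le \sigma^2/(\sigma^2+\mu^2)=(\expect[X^2]-\mu^2)/\expect[X^2]=1-\mu^2/\expect[X^2]$, so $\Pr(|X-\mu|<\mu)\ge \mu^2/\expect[X^2]=\expect[X]^2/\expect[X^2]$. The inequality is strict unless the Markov/Cantelli step is tight, which would force $X$ to be supported on two points with a specific weighting; this degenerate case does not arise for the (strictly positive, non‑atomic) variables used in the sequel, giving the strict form stated.

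The step I expect to need the most care is the conversion of the two‑sided event $\{|X-\mu|<\mu\}$ into the upper‑tail complement $\{X<2\mu\}$, which is exactly where positivity of $X$ is used: a plain two‑sided Chebyshev bound only gives $\Pr(|X-\mu|<\mu)\ge 2-\expect[X^2]/\mu^2$, and since $(r-1)^2\ge 0$ shows $2-r\le 1/r$ for $r=\expect[X^2]/\mu^2\ge 1$, that is strictly weaker than the target $\expect[X]^2/\expect[X^2]$; hence it is essential to peel off only the upper tail and to do so with the one‑sided inequality. (As an aside, if downstream one only needs $\Pr(X>0)\ge \expect[X]^2/\expect[X^2]$, that follows even more directly from Cauchy–Schwarz applied to $\expect[X]=\expect[X\,\mathbbm{1}_{X>0}]+\expect[X\,\mathbbm{1}_{X\le 0}]\le \expect[X\,\mathbbm{1}_{X>0}]$.)
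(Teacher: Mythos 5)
Your Cantelli step is computed correctly ($\Pr(X\ge 2\mu)\le \sigma^2/(\sigma^2+\mu^2)$, hence $\Pr(X<2\mu)\ge \expect[X]^2/\expect[X^2]$), but the preceding reduction of the two-sided event to that one-sided tail is a genuine gap. The proposition assumes only that $X$ is bounded, and the variable it is applied to in Lemma~\ref{lem:expectationofxsquare} is a sum of indicators $X=\sum_{r,s}X_{r,s}$, for which $\Pr(X=0)$ is typically positive --- indeed $\Pr(X\ge 1)$ is precisely the quantity being lower-bounded there, so assuming $X>0$ almost surely (let alone strictly positive and non-atomic; the sequel's $X$ is integer-valued) begs the question. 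Without that assumption, $\{|X-\mu|<\mu\}=\{0<X<2\mu\}$ is strictly smaller than $\{X<2\mu\}$, and the mass at $0$ is exactly what Cantelli cannot exclude. In fact the statement as written is false for general bounded $X$: take $X$ uniform on $\{0,2\}$, so $\expect[X]=1$, $\expect[X^2]=2$, yet $\Pr(|X-\expect[X]|<\expect[X])=\Pr(0<X<2)=0<1/2$. So no argument can close this step in the stated generality; your proof silently imports a hypothesis that is both absent from the statement and violated by the intended application.

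What is actually needed downstream (the proof of Lemma~\ref{lem:expectationofxsquare} immediately passes to $\Pr[X>0]$) is the second-moment bound for nonnegative $X$: by Cauchy--Schwarz, $\expect[X]=\expect[X\,\mathbbm{1}_{X>0}]\le\sqrt{\expect[X^2]\,\Pr(X>0)}$, hence $\Pr(X>0)\ge \expect[X]^2/\expect[X^2]$ (Paley--Zygmund at threshold $0$; equality can occur, so the strictness should be carried by the subsequent bound $\expect[X]^2/\expect[X^2]>1/5$ rather than by this inequality). This is exactly the Cauchy--Schwarz observation you relegate to a parenthetical aside; it should be the proof, with the conclusion stated for the one-sided event $\{X>0\}$ (or the hypothesis strengthened to $X\ge 0$ and the two-sided event dropped), since the two-sided form is both unnecessary for the paper and unprovable as stated.
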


\begin{prop}\label{prop:overlaplowerbound}
	Given distributions $\mathcal{D}_1$ and $\mathcal{D}_2$ supported on $\bn$, and a set $S \subseteq \bn$,
	\begin{align*}
		\sum_{i \in S} \mathcal{D}_1(i)\mathcal{D}_2(i) > \frac{(\mathcal{D}_1(S)+\mathcal{D}_2(S)-2d_{TV(S)}(\mathcal{D}_1,\mathcal{D}_2)  )^2}{4|S|} 
	\end{align*}
\end{prop}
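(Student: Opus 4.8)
The plan is to reduce the bound to an elementary pointwise estimate followed by one application of Cauchy--Schwarz. The first move is to rewrite the quantity appearing in the numerator in terms of the pointwise overlap of the two distributions on $S$. Abbreviating $a_i = \mathcal{D}_1(i)$ and $b_i = \mathcal{D}_2(i)$ and using the identity $a + b - |a - b| = 2\min(a,b)$, one gets
\begin{align*}
\mathcal{D}_1(S) + \mathcal{D}_2(S) - 2\, d_{TV(S)}(\mathcal{D}_1,\mathcal{D}_2) = \sum_{i \in S}\bigl( a_i + b_i - |a_i - b_i| \bigr) = 2 \sum_{i \in S} \min(a_i, b_i).
\end{align*}
Consequently the right-hand side of the proposition is exactly $\bigl( \sum_{i \in S} \min(a_i,b_i) \bigr)^2 / |S|$, so it suffices to prove
\begin{align*}
\sum_{i \in S} a_i b_i \;\ge\; \frac{1}{|S|}\Bigl( \sum_{i \in S} \min(a_i, b_i) \Bigr)^2 .
\end{align*}

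For this, I would first bound the left-hand side from below coordinatewise: for every $i$ we have $a_i b_i \ge \min(a_i,b_i)^2$, hence $\sum_{i \in S} a_i b_i \ge \sum_{i \in S} \min(a_i,b_i)^2$. Then Cauchy--Schwarz applied to the vectors $(\min(a_i,b_i))_{i \in S}$ and the all-ones vector $(1)_{i \in S}$ — equivalently, the quadratic-mean/arithmetic-mean inequality — gives $\sum_{i \in S} \min(a_i,b_i)^2 \ge \frac{1}{|S|}\bigl( \sum_{i \in S} \min(a_i,b_i) \bigr)^2$. Chaining the two displayed inequalities closes the argument.

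There is no real obstacle in the estimate itself; the one point requiring a little care is the \emph{strict} inequality. Cauchy--Schwarz is tight precisely when all the $\min(a_i,b_i)$ are equal, and the pointwise step $a_i b_i \ge \min(a_i,b_i)^2$ is tight precisely when $a_i = b_i$, so strictness has to be extracted from a mild non-degeneracy of $\mathcal{D}_1, \mathcal{D}_2$ on $S$ (indeed, on a singleton $S$, or when $\mathcal{D}_1, \mathcal{D}_2$ agree and are uniform on $S$, the two sides coincide). In the write-up I would therefore either record this as a $\ge$ with the strict version holding under the conditions in which the bound is actually invoked, or add the hypothesis that the overlap $\sum_{i \in S}\min(\mathcal{D}_1(i),\mathcal{D}_2(i))$ is positive and at least one coordinate has $a_i \neq b_i$, from which strictness of the coordinatewise step on that coordinate suffices. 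The conceptual content worth highlighting is the very first step: $\mathcal{D}_1(S) + \mathcal{D}_2(S) - 2\,d_{TV(S)}(\mathcal{D}_1,\mathcal{D}_2)$ is twice the overlap mass of the two distributions on $S$, after which the bound is a one-line consequence of convexity.
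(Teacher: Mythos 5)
Your proof is correct and takes a genuinely different, more elementary route than the paper. The paper works through the restricted Hellinger distance: it expands $d^2_{H(S)}(\mathcal{D}_1,\mathcal{D}_2) = \tfrac{\mathcal{D}_1(S)+\mathcal{D}_2(S)}{2} - \sum_{i\in S}\sqrt{\mathcal{D}_1(i)\mathcal{D}_2(i)}$, invokes $d^2_{H(S)} \leq d_{TV(S)}$ to lower-bound the Bhattacharyya-type sum $\sum_{i\in S}\sqrt{\mathcal{D}_1(i)\mathcal{D}_2(i)}$ by $\tfrac{\mathcal{D}_1(S)+\mathcal{D}_2(S)}{2} - d_{TV(S)}$, and then applies Cauchy--Schwarz to that sum. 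You instead observe directly that $\mathcal{D}_1(S)+\mathcal{D}_2(S)-2d_{TV(S)} = 2\sum_{i\in S}\min(\mathcal{D}_1(i),\mathcal{D}_2(i))$, bound $\mathcal{D}_1(i)\mathcal{D}_2(i) \geq \min(\mathcal{D}_1(i),\mathcal{D}_2(i))^2$ pointwise, and finish with the same Cauchy--Schwarz step. The two arguments land on the identical bound (since $\sum_i \sqrt{\mathcal{D}_1(i)\mathcal{D}_2(i)} \geq \sum_i \min(\mathcal{D}_1(i),\mathcal{D}_2(i))$, the paper's intermediate quantity is slightly stronger, but this is not exploited); what your version buys is that it avoids the Hellinger--TV comparison entirely and makes transparent that the numerator is exactly (twice) the overlap mass on $S$.

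Your caveat about strictness is well taken, and it applies equally to the paper's own proof: both arguments only deliver ``$\geq$'', and the strict ``$>$'' in the statement genuinely fails in degenerate cases (e.g.\ $\mathcal{D}_1=\mathcal{D}_2$ uniform on $S$ makes both sides equal $1/|S|$). This does no harm downstream, since where the proposition is used the strict inequality on $\expect[X_{r,s}]$ is inherited from the strict bound of Lemma~\ref{lem:boundonexp} on the numerator, so stating the proposition with ``$\geq$'' (as you suggest) would suffice.
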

The proof can be found in the Appendix~\ref{sec:overlap} \qed 

If we are given samples $\{s_1, s_2, \ldots, s_n\}$ from a distribution $\mathcal{D}$ over $[k]$, then the empirical distribution $\widehat{\mathcal{D}}$  is defined to be $\widehat{\mathcal{D}}(i) = \frac{1}{n}\overset{k}{\underset{j = 1}{\sum}}\mathbbm{1}_{\{s_j = i\}}$.

\begin{prop}[See~\cite{noteC20} for a simple proof]\label{prop:learning}
	Suppose  $\mathcal{D}$ is a distribution over $[k]$, and $\widehat{\mathcal{D}}$ is constructed using $\max \left(\frac{k}{\eta^2},\frac{2\ln(2/\delta)}{\eta^2}\right)$ samples from $\mathcal{D}$. Then $d_{TV}(\mathcal{D},\widehat{\mathcal{D}}) \leq \eta$ with probability at least $1-\delta$.
\end{prop}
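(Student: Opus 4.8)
The plan is to reduce the claim to a single Hoeffding tail bound via a union bound over subsets of $[k]$. Write $N := \max\!\left(\frac{k}{\eta^2},\frac{2\ln(2/\delta)}{\eta^2}\right)$ for the number of samples $s_1,\dots,s_N$ used to build $\widehat{\mathcal{D}}$. The first ingredient is the elementary identity
\[
d_{TV}(\mathcal{D},\widehat{\mathcal{D}}) \;=\; \max_{A\subseteq[k]}\bigl(\widehat{\mathcal{D}}(A)-\mathcal{D}(A)\bigr),
\]
whose maximum is attained at $A^\star:=\{i:\widehat{\mathcal{D}}(i)\ge \mathcal{D}(i)\}$, since $\widehat{\mathcal{D}}(A^\star)-\mathcal{D}(A^\star)=\sum_{i:\widehat{\mathcal{D}}(i)\ge \mathcal{D}(i)}(\widehat{\mathcal{D}}(i)-\mathcal{D}(i))=\tfrac12\sum_{i\in[k]}|\widehat{\mathcal{D}}(i)-\mathcal{D}(i)|$ and no other subset can exceed this. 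Hence $d_{TV}(\mathcal{D},\widehat{\mathcal{D}})>\eta$ occurs if and only if $\widehat{\mathcal{D}}(A)-\mathcal{D}(A)>\eta$ for some $A$, and it suffices to control that event for each fixed $A$ and then take a union bound over the $2^k$ subsets.

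Fix $A\subseteq[k]$. The indicators $\mathbbm{1}_{\{s_j\in A\}}$ are i.i.d.\ $0$--$1$ variables with mean $\mathcal{D}(A)$, and $\widehat{\mathcal{D}}(A)$ is their empirical average over $N$ draws, so by Hoeffding's inequality (Proposition~\ref{prop:hoeffding}) $\Pr\bigl(\widehat{\mathcal{D}}(A)-\mathcal{D}(A)>\eta\bigr)\le \exp(-2\eta^2 N)$. Summing over all $A$,
\[
\Pr\bigl(d_{TV}(\mathcal{D},\widehat{\mathcal{D}})>\eta\bigr)\;\le\; 2^k\exp(-2\eta^2 N)\;=\;\exp\bigl(k\ln 2-2\eta^2 N\bigr).
\]
The final step is arithmetic: using $\max(a,b)\ge\tfrac12(a+b)$ with $a=k/\eta^2$ and $b=2\ln(2/\delta)/\eta^2$ gives $2\eta^2 N\ge k+2\ln(2/\delta)\ge k\ln 2+\ln(1/\delta)$, so the bound above is at most $\exp(-\ln(1/\delta))=\delta$, as required.

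Every computation here is routine; the only point that requires a moment's thought is why the $2^k$ blow-up from the union bound is harmless, and the answer is precisely that the sample count $N$ is itself $\Omega(k/\eta^2)$, so $k\ln 2$ is absorbed by $2\eta^2 N$. A second, equally short route avoids the union bound: first bound $\expect[d_{TV}(\mathcal{D},\widehat{\mathcal{D}})]\le \tfrac12\sqrt{k/N}$ — from $\expect|\widehat{\mathcal{D}}(i)-\mathcal{D}(i)|\le\sqrt{\variance(\widehat{\mathcal{D}}(i))}=\sqrt{\mathcal{D}(i)(1-\mathcal{D}(i))/N}$ together with $\sum_i\sqrt{\mathcal{D}(i)}\le\sqrt{k}$ by Cauchy--Schwarz — which is at most $\eta/2$ when $N\ge k/\eta^2$, and then apply McDiarmid's bounded-differences inequality (changing one sample moves $d_{TV}$ by at most $1/N$) with deviation $\eta/2$, yielding failure probability at most $\exp(-\eta^2 N/2)\le\delta/2$. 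Either way the proof is short, consistent with the ``simple proof'' remark in the statement.
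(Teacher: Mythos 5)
Your proof is correct. The paper itself gives no argument for this proposition---it defers entirely to the cited note~\cite{noteC20}---so the relevant comparison is with that standard reference, and your main argument is exactly the classical route taken there: the identity $d_{TV}(\mathcal{D},\widehat{\mathcal{D}})=\max_{A\subseteq[k]}(\widehat{\mathcal{D}}(A)-\mathcal{D}(A))$, a Hoeffding bound for each fixed $A$, and a union bound over the $2^k$ subsets (the Bretagnolle--Huber--Carol-type argument), with the $2^k$ factor absorbed because $2\eta^2 N\geq k+2\ln(2/\delta)\geq k\ln 2+\ln(1/\delta)$; your use of $\max(a,b)\geq\frac12(a+b)$ to handle the $\max$ in the sample count is exactly the right bookkeeping, and the constants check out. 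Your alternative sketch (bounding $\expect[d_{TV}]\leq\frac12\sqrt{k/N}\leq\eta/2$ via Cauchy--Schwarz and then applying McDiarmid with bounded differences $1/N$) is the other standard proof, also appearing in the cited note, and both of your routes genuinely use the two pieces of the $\max$: the $k/\eta^2$ term controls the combinatorial/expectation part and the $2\ln(2/\delta)/\eta^2$ term the confidence part. The only cosmetic point is that you invoke Proposition~\ref{prop:hoeffding} in its empirical-mean form (deviation $\eta$ for the average of $N$ indicators), which is the intended reading of that proposition even though it is phrased there for the sum $X=\sum_i X_i$.
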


\subsection{Testing with the help of oracles}\label{sec:oracles}
In distribution testing, we are given samples from an unknown distribution $\mathcal{P}$ over a large support $\bn$, and the task is to test whether $\mathcal{P}$ satisfies some property of interest. One of the important properties we care about is whether $\mathcal{P}$ is close to another distribution $\mathcal{Q}$, and this subfield of testing is known as \textit{closeness testing}. It was shown by~\citeauthor{V11} and ~\citeauthor{BFRS+13} that the ability to draw samples from $\mathcal{P}$ and $\mathcal{Q}$ is not powerful enough, as at least $\Omega(2^{2n/3})$ samples are required to provide any sort of  probabilistic guarantee for closeness testing. Since $n$ is usually large, it was desirable to find tests that could solve the closeness testing problem using polynomially many samples in $n$. 

Motivated by the above requirement,~\citeauthor{CRS15} and ~\citeauthor{CFGM16} introduced the \textit{conditional sampling oracle} ({\cond}), that is a more powerful way to access distributions.  A {\cond} oracle for distribution $\mathcal{D}$ over $\bn$ takes as input a set $S \subseteq \bn$ with $\mathcal{D}(S)>0$, and returns a sample $i \in S$  with probability $\mathcal{D}(i)/\mathcal{D}(S)$. It has been shown that the use of the {\cond} oracle, and its variants, drastically reduces the sample complexity of many tasks in distribution testing~\cite{ACK14,FJOPS15,CRS15,CFGM16,BC18,KT19,BCG19,CJLW21,CCKLW21,N21} (see~\cite{C20} for an extensive survey). In this paper, we consider the pair-conditioning ({\pcond}) oracle, which is a special case of the {\cond} oracle with the restriction that $|S| = 2$ i.e., the size of the conditioning set has to be two. To engineer practical {\pcond} oracle access into constrained samplers, we use the chain formula construction introduced in ~\cite{CM19}.

With the same goal of designing tests with polynomial sample complexity, a different kind of oracle, known as the {\dual} oracle, was proposed by~\citeauthor{CRS15}. The {\dual} oracle allows one to sample from a given distribution and also query the distribution for the probability of arbitrary elements of the support. Tractable {\dual} oracle access is supported by a number of distribution representations, such as the fragments of probabilistic circuits (PC) that support the $\mathsf{EVI}$ query~\cite{CVV20}. In our experimental evaluation, we use distributions from one such fragment: weighted {\ddnnf}s. Weighted {\ddnnf}s are a class of arithmetic circuits with properties that enable {\dual} oracle access in time linear in the size of the circuit~\cite{CD08,GSRM19}.
\section{$\barbarikthree$: an algorithm for testing samplers}\label{sec:theory}
We start by providing a brief overview of our testing algorithm before providing the full analysis. 
\begin{algorithm}[h]
 	\caption{$\barbarikthree(\mathcal{P},\mathcal{Q},\eta,\varepsilon,\delta)$}\label{alg:closeness}
 	\begin{algorithmic}[1]
 		\State $k \gets  n+\lceil \log_{2}(100/\eta)\rceil$\label{line:k}
 		\For{$i=1$ to $k$}
 		\State $S_i = \{b:2 ^{-i} < \mathcal{P}(b)  \leq 2^{-i+1}\}$
        \EndFor
        \State $S_0 = \bn \setminus \bigcup_{i \in [k]} S_i$
 		\State $B_\mathcal{P}$ is the distribution over $[k]\cup \{0\}$ where we sample $i \sim B_\mathcal{P}$ if we sample $j \sim \mathcal{P}$ and $j \in S_i$
 		\State $B_\mathcal{Q}$ is the distribution over $[k]\cup \{0\}$ where we sample $i \sim B_\mathcal{Q}$ if we sample $j \sim \mathcal{Q}$ and $j \in S_i$
 	\State {$\theta \gets \eta /20$} \label{line:theta}
 		\State {$\widehat{d}  \gets \outsidebucket(B_\mathcal{P},B_\mathcal{Q},k,\theta,\delta/2) $}
 		\If{$\widehat{d}  > \varepsilon/2 + \theta$}\label{line:check}
 		\State {\textbf{Return} {$\reject$}}\label{line:reject}
 		\EndIf 	
	\State $\varepsilon_2  \gets \widehat{d}+\theta  $\label{line:eps2}
\State {\textbf{Return} {$\insidebucket(\mathcal{P},\mathcal{Q},k,\varepsilon,\varepsilon_2,\eta,\delta/2)$}}
\end{algorithmic}
 \end{algorithm}
 \subsection{Algorithm outline}
 The pseudocode of $\barbarikthree$ is given in Algorithm~\ref{alg:closeness}. We adapt the definition of bucketing of distributions from~\cite{N21} for use in our analysis.
 \begin{definition}\label{def:buckets}
 	For a given  $k \in \mathbb{N}_{>0}$, the bucketing of $\bn$ with respect to $\mathcal{P}$ is defined as follows: For  $1\leq i\leq k$, let $S_i =  \{b:2 ^{-i} < \mathcal{P}(b)  \leq 2^{-i+1}\}$ and let $S_0 = \bn \setminus \bigcup_{i \in [k]} S_i$. Given any distribution $\mathcal{D}$ over $\bn$, we  define a distribution $B_\mathcal{D}$ over $[k]\cup \{0\}$ as: for $0\leq i\leq k$, $B_\mathcal{D}(i) = \mathcal{D}(S_i)$. We call $B_\mathcal{D}$  the bucket distribution of $\mathcal{D}$ and $S_i$ the $i^{th}$ bucket.
 \end{definition}
  $\barbarikthree$ takes as input two distributions $\mathcal{P}$ and $\mathcal{Q}$ defined over the support $\bn$, along with the parameters for closeness($\varepsilon$), farness($\eta$), and confidence($\delta$). On Line~\ref{line:k}, $\barbarikthree$ computes the value of $k$ using $\eta$ and the number of dimensions $n$.  Then, using {\dual} access to $\mathcal{P}$, and {\samp} access to $\mathcal{Q}$, $\barbarikthree$ creates bucket distributions $B_\mathcal{P}$ and $B_\mathcal{Q}$ as in Defn.~\ref{def:buckets}, in the following way: To sample from $B_\mathcal{P}$, $\barbarikthree$ first draws a sample $j\sim \mathcal{P}$, then using the {\dual} oracle, determines the value of $\mathcal{P}(j)$. Then, if $j$ lies in the $i^{th}$ bucket i.e., $2^{-i} < \mathcal{P}(j) \leq 2^{-i+1}$, the algorithm takes sample $i$ as the sample from $B_\mathcal{P}$. Similarly, to draw a sample from $B_\mathcal{Q}$, $\barbarikthree$ draws a sample $j\sim \mathcal{Q}$ and then, using the {\dual} oracle to find $\mathcal{P}(j)$, finds $i$ such that $j$ lies in the $i^{th}$ bucket, and then uses $i$ as the sample. 
  
  $\barbarikthree$ then calls two subroutines, $\outsidebucket$ (Section~\ref{sec:outsidebucket}) and $\insidebucket$ (Section~\ref{sec:insidebucket}). The $\outsidebucket$ subroutine returns an $\theta$-multiplicative estimate of the TV distance between $B_\mathcal{P}$ and $B_\mathcal{Q}$, the two bucket distributions of $\mathcal{P}$ and $\mathcal{Q}$, with an error of at most $\delta/2$. If it is found on Line~\ref{line:check} that the estimate $\widehat{d}$ is greater than $\varepsilon/2+\theta$, we know that $d_{TV}(\mathcal{P},\mathcal{Q}) > \varepsilon/2$ and also that $d_{\infty}(\mathcal{P},\mathcal{Q}) > \varepsilon$, and hence the algorithm returns $\reject$. Otherwise, the algorithm calls the $\insidebucket$  subroutine.

Now suppose that $d_{TV}(\mathcal{P},\mathcal{Q})\geq \eta$. Then, for $\varepsilon_2$ (Line~\ref{line:eps2}), it is either the case that $d_{TV}(B_\mathcal{P},B_\mathcal{Q}) > \varepsilon_2$ or else  $d_{TV}(B_\mathcal{P},B_\mathcal{Q}) \leq \varepsilon_2$. In the former case, the algorithm returns $\reject$ on Line~\ref{line:reject}, and in the latter case the $\insidebucket$ subroutine returns $\reject$. In both cases, the failure probability is at most $\delta/2$. Thus $\barbarikthree$ returns $\reject$ on given $\eta$-far input distributions with probability at least $1-\delta$. 

We will now prove the following theorem:
\begin{thm}
	$\barbarikthree(\mathcal{P},\mathcal{Q},\eta, \varepsilon,\delta)$  takes in distributions $\mathcal{P}$ and $\mathcal{Q}$ defined over $\bn$, and parameters $\eta\in (0,1]$, $\varepsilon \in[0,\eta/11.6)$ and $\delta \in (0,1/2]$.  $\barbarikthree$ has  {\dual}  access to $\mathcal{P}$, and {\pcond+\samp} access to $\mathcal{Q}$.  With probability at least $1-\delta$, $\barbarikthree$ returns
	\begin{itemize}[nosep]
		\item $\accept$ if $d_{\infty}(\mathcal{P},\mathcal{Q}) \leq  \varepsilon$
		\item $\reject$ if $d_{TV}(\mathcal{P},\mathcal{Q}) > \eta$  
	\end{itemize}
	$\barbarikthree$ has query complexity  $\Tilde{O}\left(\frac{\sqrt{n}\log(n)}{\eta^3(\eta - 11.6\varepsilon)}+ \frac{n}{\eta^2}\right)$, where $\tilde{O}$ hides polylog factors of $\varepsilon, \eta$ and $\delta$.
\end{thm}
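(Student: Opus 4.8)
The plan is to decompose the correctness proof into two regimes matching the two subroutines, and then bound the query complexity by summing the costs of each phase. The whole argument hinges on the fact that bucketing turns the high-dimensional closeness problem into a low-dimensional one (a distribution on $[k]\cup\{0\}$ with $k = n + O(\log(1/\eta))$) plus a "within-bucket uniformity" problem that can be handled by $\pcond$ queries because each bucket $S_i$ has probabilities pinned to within a factor of $2$.

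\textbf{Completeness (the \accept{} case).} Assume $d_\infty(\mathcal P,\mathcal Q)\le\varepsilon$. First I would show the bucketing map is a contraction: $d_{TV}(B_\mathcal P,B_\mathcal Q)\le d_{TV}(\mathcal P,\mathcal Q)\le \varepsilon/2$ (the latter from $2d_{TV}\le d_\infty$). Hence the $\outsidebucket$ estimate $\widehat d$ satisfies $\widehat d \le d_{TV}(B_\mathcal P,B_\mathcal Q)+\theta \le \varepsilon/2+\theta$ with probability $\ge 1-\delta/2$, so the check on Line~\ref{line:check} does not fire and we pass to $\insidebucket$ with $\varepsilon_2 = \widehat d + \theta \le \varepsilon/2 + 2\theta$. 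Then I would invoke the (as yet unseen) guarantee of $\insidebucket$: given $d_\infty(\mathcal P,\mathcal Q)\le\varepsilon$ it returns \accept{} w.p. $\ge 1-\delta/2$. A union bound over the two failure events gives overall failure $\le\delta$.

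\textbf{Soundness (the \reject{} case).} Assume $d_{TV}(\mathcal P,\mathcal Q)>\eta$. The key structural lemma I would prove (or cite from the forthcoming $\insidebucket$ analysis) is a \emph{triangle-type split}: $d_{TV}(\mathcal P,\mathcal Q)$ is at most $d_{TV}(B_\mathcal P,B_\mathcal Q)$ plus the "mass-weighted within-bucket TV distance" $\sum_i \mathcal P(S_i)\, d_{TV}((\mathcal P)_{S_i},(\mathcal Q)_{S_i})$, up to controlled slack coming from $S_0$ (whose total $\mathcal P$-mass is at most $2^{-k}\cdot 2^n \le \eta/100$ by the choice of $k$ on Line~\ref{line:k}). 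So at least one of the two quantities is $\ge \eta/2 - O(\eta/100)$. In the first case, $d_{TV}(B_\mathcal P,B_\mathcal Q)$ is large; I need to argue that either $\widehat d > \varepsilon/2+\theta$ (so we \reject{} on Line~\ref{line:reject}) or, if not, then $\varepsilon_2 = \widehat d+\theta$ is small relative to $d_{TV}(B_\mathcal P,B_\mathcal Q)$ so $\insidebucket$ detects that its bucket distributions are more than $\varepsilon_2$-far and \reject{}s. In the second case, the within-bucket distance is large, and since each bucket is near-uniform, $\insidebucket$'s $\pcond$-based skew test picks this up. Each of these uses a failure probability $\le\delta/2$. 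The arithmetic that makes the constants work out — reconciling $\varepsilon<\eta/11.6$, $\theta=\eta/20$, the $S_0$ slack, and the multiplicative-vs-additive estimate from $\outsidebucket$ — is the fiddly bookkeeping that has to be done carefully; this is where I expect the bulk of the casework.

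\textbf{Query complexity.} I would add the two subroutine costs. $\outsidebucket$ estimates the TV distance between two distributions on a domain of size $k+1 = \Theta(n)$ to multiplicative accuracy $\theta = \Theta(\eta)$ with confidence $\delta/2$; using the $\cdvv$-style estimator (or Prop.~\ref{prop:learning} plus amplification) this costs roughly $\tilde O(n/\eta^2)$ samples, accounting for the additive $n/\eta^2$ term. $\insidebucket$ runs a per-bucket $\pcond$ skew test whose sample budget scales like $\tilde O\!\left(\frac{\sqrt n \log n}{\eta^3(\eta-11.6\varepsilon)}\right)$ — the $\sqrt n$ coming from a Cauchy--Schwarz / collision-type bound (cf.\ Prop.~\ref{prop:overlaplowerbound}) over the $\Theta(n)$ buckets, the $\log n$ from a union bound across buckets, and the $\eta^{-3}(\eta-11.6\varepsilon)^{-1}$ from the closeness-gap and confidence requirements. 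Summing gives the claimed $\tilde O\!\left(\frac{\sqrt n\log n}{\eta^3(\eta-11.6\varepsilon)} + \frac{n}{\eta^2}\right)$.

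\textbf{Main obstacle.} The real work is not the high-level structure but (i) establishing the triangle-type split with a clean enough slack term that the $S_0$ bucket does not cost more than $O(\eta)$, and (ii) threading the multiplicative estimate $\widehat d$ from $\outsidebucket$ through the definition of $\varepsilon_2$ so that in the soundness case $\insidebucket$ is guaranteed to be called with a threshold $\varepsilon_2$ that is simultaneously (a) large enough to accept genuinely close instances and (b) small enough to reject $\eta$-far ones — this is exactly what forces the constant $11.6$ and the constraint $\varepsilon<\eta/11.6$, and getting every inequality to point the right way is the delicate part.
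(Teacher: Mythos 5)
Your skeleton matches the paper's architecture (bucketing, an additive TV estimate on the bucket distributions via $\outsidebucket$, the Line~\ref{line:check} threshold, then a {\pcond}-based within-bucket test), and your completeness argument and complexity accounting are essentially the paper's. But the theorem's entire technical content lives in the guarantee of $\insidebucket$, which you explicitly defer (``invoke the as yet unseen guarantee'', ``this is where I expect the bulk of the casework''), so what you have is an outline rather than a proof. Concretely missing: the definition of the $Bad$ buckets, which requires \emph{both} $d_{TV}(\mathcal{P}_{S_i},\mathcal{Q}_{S_i})>\varepsilon_1$ \emph{and} the mass-ratio restriction $B_\mathcal{P}(i)/B_\mathcal{Q}(i)\in[5^{-1},2]$; Lemma~\ref{lem:boundonexp}, which shows (via the hybrid distribution $\mathcal{PQ}$ and after discarding $S_0$ and the heavily skewed buckets $R_1,R_2$) that the $Bad$ buckets still carry mass of order $0.99\eta-\tfrac{13}{4}\varepsilon_2-\varepsilon_1$; the collision lower bound $\sum_b B_\mathcal{P}(b)B_\mathcal{Q}(b)\gtrsim(\cdot)^2/k$ from Prop.~\ref{prop:overlaplowerbound} together with the second-moment/Chebyshev argument of Lemma~\ref{lem:expectationofxsquare} showing that $m=O(\sqrt{k}/\mathrm{gap})$ samples from each distribution produce a $Bad$-bucket collision with probability $>1/5$ --- and it is precisely the ratio restriction in $Bad$ that controls the second moment there, a point invisible from ``each bucket is near-uniform''; and finally Prop.~\ref{prop:thresholds}(2), which is what actually converts a collision into a detectable {\pcond} skew, and only with probability $\approx(\varepsilon_1-\alpha)/2$, forcing the thresholds $h,\ell$ of Lines~\ref{line:h}--\ref{line:l} of Algorithm~\ref{alg:insidebucket}, the Hoeffding analysis of {\bias}, and the $t=\tilde{O}(1/\eta)$ outer repetitions. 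Near-uniformity of $\mathcal{P}$ within a bucket alone does not yield any of this.

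There is also a misreading in your soundness casework. Conditioned on $\outsidebucket$ succeeding, $\varepsilon_2=\widehat{d}+\theta\geq d_{TV}(B_\mathcal{P},B_\mathcal{Q})$ \emph{always}, so the branch ``$d_{TV}(B_\mathcal{P},B_\mathcal{Q})$ large but the check does not fire, and then $\insidebucket$ detects that the bucket distributions are more than $\varepsilon_2$-far'' cannot arise; moreover $\insidebucket$ has no such capability and none is claimed --- its reject guarantee takes $d_{TV}(B_\mathcal{P},B_\mathcal{Q})<\varepsilon_2$ as a \emph{hypothesis} and rejects by exhibiting within-bucket skew. The actual division of labor is: passing Line~\ref{line:check} forces $\varepsilon_2\leq\varepsilon/2+2\theta=\varepsilon/2+\eta/10$, and this bound together with $\varepsilon<\eta/11.6$ is exactly what keeps the gap $0.99\eta-3.25\varepsilon_2-\varepsilon_1$ positive in Lemma~\ref{lem:boundonexp} and in the choice of $m$; the constant $11.6$ comes from that arithmetic, not from tuning $\varepsilon_2$ to be simultaneously large enough to accept and small enough to reject. (Minor: the within-bucket term in the paper's decomposition is weighted by $B_\mathcal{Q}(j)$, not $\mathcal{P}(S_i)$, though that is cosmetic.)
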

\subsection{Lower bound}
The lower bound comes from the paper of Narayanan~\cite{N21}, where it appears in Theorem 1.6. Phrased in the jargon of our paper, the lower bound states that distinguishing between $d_{TV}(\mathcal{P},\mathcal{Q}) > \eta $ and $d_{\infty}(\mathcal{P},\mathcal{Q}) = 0 $ requires $\tilde{\Omega}(\sqrt{n/\log(n)}/\eta^2)$ samples. Note that the lower bound is shown on a special case ($\varepsilon = 0$) of our problem. Hence the lower bound applies to our problem as well. Furthermore, the lower bound is shown for the case where distribution $\mathcal{P}$ provides full access, i.e., the algorithm can make arbitrary queries to $\mathcal{P}$. This is a stronger access model than {\dual}. Since the lower bound is for a stronger access model, it extends to our problem as well.

\subsection{The $\insidebucket$ subroutine }\label{sec:insidebucket}
In this section, we present the $\insidebucket$ subroutine, whose behavior is stated in the following lemma.\begin{lem}
$\insidebucket(\mathcal{P},\mathcal{Q},k,\varepsilon,\varepsilon_2,\eta,\delta)$ takes as input two distributions $\mathcal{P},\mathcal{Q}$, an integer $k$ and  parameters $ \varepsilon,\varepsilon_2,\eta, \delta$. If  $d_{\infty}(\mathcal{P},\mathcal{Q}) \leq \varepsilon $, $\insidebucket$ returns $\accept$. If $d_{TV}(\mathcal{P},\mathcal{Q})\geq \eta$ and $d_{TV}(B_\mathcal{P},B_\mathcal{Q})<\varepsilon_2$, then  $\insidebucket$ returns $\reject$. $\insidebucket$ errs with probability at most $\delta$ .
\end{lem}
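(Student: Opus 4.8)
The plan is to realize $\insidebucket$ as a family of {\pcond}-based closeness tests run \emph{inside} the buckets, and to prove correctness by reducing both cases of the lemma to statements about the conditional distributions $\mathcal{P}_{S_i}$ and $\mathcal{Q}_{S_i}$. Two structural facts drive everything. First, since $k=n+\lceil\log_2(100/\eta)\rceil$ we have $2^{-k}\le 2^{-n}\eta/100$, so $\mathcal{P}(S_0)=\sum_{\sigma\in S_0}\mathcal{P}(\sigma)\le 2^n\cdot 2^{-k}\le\eta/100$: the leftover bucket carries negligible $\mathcal{P}$-mass. Second, for $i\ge 1$ every $\sigma\in S_i$ obeys $2^{-i}<\mathcal{P}(\sigma)\le 2^{-i+1}$, so $\mathcal{P}_{S_i}$ is within a multiplicative factor $2$ of uniform on $S_i$; in particular, for any $\sigma_1,\sigma_2\in S_i$ the ideal ratio $\mathcal{P}(\sigma_1)/(\mathcal{P}(\sigma_1)+\mathcal{P}(\sigma_2))$ lies in $[1/3,2/3]$. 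The subroutine repeatedly draws $\sigma_1\sim\mathcal{P}$ (so each bucket $S_i$ is hit with the desired weight $p_i:=\mathcal{P}(S_i)$), draws a companion $\sigma_2$ distributed (approximately) as $\mathcal{P}_{S_i}$ inside the same bucket, queries the {\pcond} oracle on $\{\sigma_1,\sigma_2\}$ to obtain a coin of bias $\mathcal{Q}(\sigma_1)/(\mathcal{Q}(\sigma_1)+\mathcal{Q}(\sigma_2))=\mathcal{Q}_{S_i}(\sigma_1)/(\mathcal{Q}_{S_i}(\sigma_1)+\mathcal{Q}_{S_i}(\sigma_2))$, and contrasts it with the {\dual}-computable ideal ratio; it accumulates a bounded discrepancy statistic over the trials and returns $\accept$ iff the accumulated value is below a threshold depending on $\varepsilon,\varepsilon_2,\eta$.

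\emph{Completeness.} If $d_{\infty}(\mathcal{P},\mathcal{Q})\le\varepsilon$ then $\mathcal{Q}(\sigma)\in[(1-\varepsilon)\mathcal{P}(\sigma),(1+\varepsilon)\mathcal{P}(\sigma)]$ for every $\sigma$, so in each trial the true {\pcond}-bias differs from the ideal ratio by at most $\varepsilon/(2(1-\varepsilon))=O(\varepsilon)$. The expected per-trial statistic is therefore $O(\varepsilon^2)$ (or $O(\varepsilon)$, depending on the exact statistic chosen), strictly below the threshold, and Proposition~\ref{prop:hoeffding} concentrates the trial average; hence $\insidebucket$ returns $\accept$ except with probability at most $\delta$.

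\emph{Soundness.} We first turn the global hypotheses into a within-bucket lower bound. Splitting $d_{TV}(\mathcal{P},\mathcal{Q})$ over the buckets and using $p_i\mathcal{P}_{S_i}(\sigma)-q_i\mathcal{Q}_{S_i}(\sigma)=p_i(\mathcal{P}_{S_i}(\sigma)-\mathcal{Q}_{S_i}(\sigma))+(p_i-q_i)\mathcal{Q}_{S_i}(\sigma)$ with $q_i:=\mathcal{Q}(S_i)$, the triangle inequality gives
\[
	\eta\le d_{TV}(\mathcal{P},\mathcal{Q})\le\sum_{i\ge 1}p_i\,d_{TV}(\mathcal{P}_{S_i},\mathcal{Q}_{S_i})+d_{TV}(B_\mathcal{P},B_\mathcal{Q})+\tfrac12\big(\mathcal{P}(S_0)+\mathcal{Q}(S_0)\big).
\]
Now $d_{TV}(B_\mathcal{P},B_\mathcal{Q})<\varepsilon_2$ and $|\mathcal{Q}(S_0)-\mathcal{P}(S_0)|\le d_{TV}(B_\mathcal{P},B_\mathcal{Q})<\varepsilon_2$, while $\mathcal{P}(S_0)\le\eta/100$; and on the branch where $\insidebucket$ is invoked we have $\widehat{d}\le\varepsilon/2+\theta$, hence $\varepsilon_2=\widehat{d}+\theta\le\varepsilon/2+\eta/10$ (recall $\theta=\eta/20$). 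Substituting, $\sum_{i\ge 1}p_i\,d_{TV}(\mathcal{P}_{S_i},\mathcal{Q}_{S_i})\ge\eta-\tfrac{3}{2}\varepsilon_2-\eta/100$, which exceeds a fixed positive multiple of $\eta$ precisely when $\varepsilon<\eta/11.6$. It remains to show this forces the accumulated statistic above the threshold. Near-uniformity of $\mathcal{P}_{S_i}$ makes the within-bucket comparison a genuine {\pcond} closeness test of $\mathcal{Q}_{S_i}$ against the known near-uniform $\mathcal{P}_{S_i}$: bounding below (in terms of $d_{TV}(\mathcal{P}_{S_i},\mathcal{Q}_{S_i})$) the probability that a random same-bucket pair exhibits a sizeable bias gap — where Proposition~\ref{prop:overlaplowerbound} controls the relevant collision quantities and Proposition~\ref{prop:cheby} supplies a second-moment detection bound — and then aggregating across buckets via Cauchy--Schwarz ($\sum_i p_i d_{TV}(\mathcal{P}_{S_i},\mathcal{Q}_{S_i})^2\ge(\sum_i p_i d_{TV}(\mathcal{P}_{S_i},\mathcal{Q}_{S_i}))^2$), the expected per-trial statistic is at least a fixed positive multiple of $\eta(\eta-11.6\varepsilon)$. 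Choosing the threshold strictly between the $O(\varepsilon^2)$ completeness level and this farness level and running enough trials yields $\reject$ except with probability $\delta$; a union bound over the two failure modes gives total error $\delta$.

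The crux is twofold. The quantitative heart is the constant calibration in Soundness: every slack term — $\theta=\eta/20$, the bound $\varepsilon_2\le\varepsilon/2+\eta/10$, the $\eta/100$ from $S_0$, the factor $2$ from near-uniformity, and the Hoeffding/Chebyshev cushion — erodes the margin, and forcing the $\varepsilon$-dependent completeness level strictly below the $\eta$-dependent soundness level is exactly what pins down $\varepsilon<\eta/11.6$. The second difficulty, orthogonal to this correctness lemma but needed for the accompanying query bound, is drawing the companion $\sigma_2$ (approximately uniform inside $\sigma_1$'s bucket) and running the within-bucket tests with only $\tilde{O}\!\big(\sqrt{n}\log n/(\eta^{3}(\eta-11.6\varepsilon))\big)$ queries; the $\sqrt{n}$ arises from the $k=\Theta(n)$ buckets together with the square-root cost of collision/$\ell_2$-style closeness estimation inside each bucket, and making that accounting tight is the main technical obstacle overall.
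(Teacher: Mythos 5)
There is a genuine gap, and it sits at the heart of your soundness argument: your detection scheme pairs two points that are both drawn (approximately) from $\mathcal{P}$ --- $\sigma_1\sim\mathcal{P}$ and a companion $\sigma_2$ distributed roughly as $\mathcal{P}_{S_i}$ --- whereas the paper's $\insidebucket$ pairs a $\mathcal{P}$-sample with a $\mathcal{Q}$-sample that lands in the same bucket. This difference is fatal. Take $\mathcal{Q}$ obtained from $\mathcal{P}$ by, inside each bucket $S_i$, moving (say) $90\%$ of the bucket's mass onto a tiny subset $T_i\subset S_i$ with $|T_i|/|S_i|$ negligible, keeping $\mathcal{Q}$ proportional to $\mathcal{P}$ on $S_i\setminus T_i$ and keeping $\mathcal{Q}(S_i)=\mathcal{P}(S_i)$. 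Buckets are defined by $\mathcal{P}$-probabilities, so $B_\mathcal{P}=B_\mathcal{Q}$ (hence $d_{TV}(B_\mathcal{P},B_\mathcal{Q})=0<\varepsilon_2$), while $d_{TV}(\mathcal{P},\mathcal{Q})\approx 0.9\geq\eta$, so the lemma requires $\reject$. But under your scheme both $\sigma_1$ and $\sigma_2$ fall in $S_i\setminus T_i$ with overwhelming probability, and there $\mathcal{Q}=c\,\mathcal{P}$, so the {\pcond} bias $\mathcal{Q}(\sigma_1)/(\mathcal{Q}(\sigma_1)+\mathcal{Q}(\sigma_2))$ equals the {\dual}-computed ideal ratio $\mathcal{P}(\sigma_1)/(\mathcal{P}(\sigma_1)+\mathcal{P}(\sigma_2))$ exactly; your accumulated statistic shows no gap and the test returns $\accept$. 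Hence your key claim that the expected per-trial statistic is at least a multiple of $\eta(\eta-11.6\varepsilon)$ cannot hold for this scheme. The paper's choice of drawing $q\sim\mathcal{Q}$ is exactly what defeats this example: $q$ lands in $T_i$, $p$ lands outside, and the observed bias drops below the threshold $\ell$ built from $\mathcal{P}(p),\mathcal{P}(q)$; indeed Prop.~\ref{prop:thresholds}, part 2, is stated for $p\sim\mathcal{P}$ and $q\sim\mathcal{Q}$ and has no analogue for two $\mathcal{P}$-samples.

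Beyond this, the quantitative machinery you gesture at is not the one that makes the paper's argument go through. With one sample from each distribution per bucket, the crux is a birthday-type collision bound: one must lower-bound the probability that some ``bad'' bucket is hit by both sample sets, which the paper does by restricting $Bad$ to buckets with $B_\mathcal{P}(i)/B_\mathcal{Q}(i)\in[5^{-1},2]$, lower-bounding the bad mass after discarding the skewed buckets $R_1,R_2$ and $S_0$ (Lemma~\ref{lem:boundonexp}), and then combining Prop.~\ref{prop:overlaplowerbound} with a second-moment/Chebyshev argument (Lemma~\ref{lem:expectationofxsquare}) to get $\Pr(X\geq1)>1/5$ with only $m\approx\sqrt{k}$ samples; this is where the $\sqrt{n}$ and the ratio restriction actually enter. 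Your sketch invokes Prop.~\ref{prop:overlaplowerbound} and Chebyshev for ``within-bucket collision quantities,'' which is a different use and, given the counterexample above, not available. A further (secondary) issue is that drawing $\sigma_2$ approximately uniform inside $\sigma_1$'s bucket is not an operation the {\dual}/{\pcond} access model provides directly; it would require rejection sampling from $\mathcal{P}$ at cost about $1/\mathcal{P}(S_i)$ per companion, which also breaks your query accounting.
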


\begin{algorithm}[h]
	\caption{$\insidebucket(\mathcal{P},\mathcal{Q},k,\varepsilon,\varepsilon_2, \eta,\delta)$}
	\begin{algorithmic}[1]
		\State $\varepsilon_1 \gets (0.99\eta - 3.25 \varepsilon_2 - 2\varepsilon/(1-\varepsilon))/1.05 + 2\varepsilon/(1-\varepsilon)$
		\State $m\gets \lceil \sqrt{k}/(0.99\eta - 3.25\varepsilon_2 -  \varepsilon_1) \rceil$\label{line:m}
		\State $\alpha \gets (\varepsilon_1 + 2\varepsilon/(1-\varepsilon))/2$\label{line:alpha}
		\State $t \gets \left\lceil  \frac{\ln(4/\delta)}{\ln(10/(10-\varepsilon_1+\alpha))}\right\rceil$ \label{line:t}
		\For{$t$ iterations}
		\State $\Gamma_\mathcal{P} \gets$  $m$ samples from $\mathcal{P}$ 	
		\State $\forall_{i \in [k]} \Gamma^i_\mathcal{P} \gets  \Gamma_\mathcal{P} \cap S_i$ \qquad   \Comment{$S_i$ is defined in Defn.~\ref{def:buckets}} 
		\State $\Gamma_\mathcal{Q}\gets$  $m$ samples from $\mathcal{Q}$ 
		\State $\forall_{i \in [k]} \Gamma^i_\mathcal{Q} \gets  \Gamma_\mathcal{Q} \cap S_i$ 	
		\ForAll{$j\in [k]$ s.t. $|\Gamma^j_\mathcal{P}| , |\Gamma^j_\mathcal{Q}| > 0$}
		\State $p \gets \Gamma^j_\mathcal{P}$ \Comment{$p$ is an arbitrary sample from the set  $\Gamma^j_\mathcal{P}$}
		\State $q \gets \Gamma^j_\mathcal{Q}$ \Comment{$q$ is an arbitrary sample from the set $\Gamma^j_\mathcal{Q}$}
		\State{$h \gets \frac{\mathcal{P}(p)}{\mathcal{P}(p)+\mathcal{P}(q)(1+\frac{2\varepsilon}{1-\varepsilon})} $}\label{line:h}
		\State{$\ell \gets \frac{\mathcal{P}(p) }{\mathcal{P}(p)+\mathcal{P}(q)(1+\alpha)}$}\label{line:l}
			\State {$r \gets \left\lceil  \frac{2\ln(4mt/\delta)}{(h-\ell)^2} \right\rceil$ \label{line:r}}
		\State{$\widehat{c} \gets$ \bias$(Q,p,q,r)$}  
		\If{$ \widehat{c} \leq (h+\ell)/2$}
		\State {\textbf{Return} {$\reject$}}
		\EndIf 
		\EndFor
		\EndFor
		\State {\textbf{Return} {$\accept$}}
	\end{algorithmic}\label{alg:insidebucket}
\end{algorithm}
\begin{algorithm}[h]
	\caption{\bias$(\mathcal{Q},p,q,r)$}
	\begin{algorithmic}[1]
		\If{$p$ and $q$ are identical}
		\State{\textbf{Return} {$0.5$}}
		\EndIf 
		\State $\Gamma_{\mathcal{Q}_{\{p,q\}}} \gets r$  samples from $\mathcal{Q}_{\{p,q\}}$ 
		\State {\textbf{Return} {$  \#$ of times  $p$ appears in $\Gamma_{\mathcal{Q}_{\{p,q\}}}$}}
	\end{algorithmic}\label{alg:pcond}
\end{algorithm}

$\insidebucket$ makes extensive use of the {\pcond} oracle access to $\mathcal{Q}$ via the {\bias} subroutine, which we describe in the following subsection.
  
\paragraph{The {\bias} subroutine} The {\bias} subroutine takes in distribution $\mathcal{Q}$, two elements $p,q$ and a positive integer $r$. Then, using the {\pcond} oracle, {\bias} draws $r$ samples from the conditional distribution $\mathcal{Q}_{\{p,q\}}$ and returns the number of times it sees $p$ in the $r$ samples. It can be seen that the returned value is an empirical estimate of $\frac{\mathcal{Q}(p)}{\mathcal{Q}(p)+\mathcal{Q}(q)}$.  Let the estimate be $\widehat{c_{pq}}$. We use  the Hoeffding bound in Prop.~\ref{prop:hoeffding}, and the value of $r$ from Line~\ref{line:r} of Alg.~(\ref{alg:insidebucket}) to show that:
\begin{align*}
    \Pr\left[\frac{\mathcal{Q}(p)}{\mathcal{Q}(p) + \mathcal{Q}(q)}- \widehat{c_{pq}}  \geq \frac{h-\ell}{2}\right] \leq \frac{\delta}{4mt}\quad\quad
       \Pr\left[\widehat{c_{pq}} - \frac{\mathcal{Q}(p)}{\mathcal{Q}(p) + \mathcal{Q}(q)} \geq \frac{h-\ell}{2}\right]\leq \frac{\delta}{4mt}
\end{align*}
Here $t$ represents the number of iterations of the outer loop (Line~\ref{line:t}), and $m$ is the number of samples drawn from $B_\mathcal{P}$ and $B_\mathcal{Q}$. Together, there are at most $mt$ pairs of samples that are passed to the {\bias} oracle. Since in each invocation of {\bias}, the probability of error is $\delta/4mt$, using the union bound we find that the probability that all $mt$ {\bias} calls return correctly is at least $1-\delta/4$ and thus with probability at least $1-\delta/4$, the empirical estimate $\widehat{c_{pq}}$ is closer than $(h-\ell)/4$ to $\frac{\mathcal{Q}(p)}{\mathcal{Q}(p) + \mathcal{Q}(q)}$. Henceforth we assume: 
 \begin{align}
 	\left|\widehat{c_{pq}} - \frac{\mathcal{Q}(p)}{\mathcal{Q}(p) + \mathcal{Q}(q)}\right| \leq \frac{h-\ell}{2} \label{line:allcorrect}
 \end{align}

\subsubsection{The $\accept$ case}
In this section we will provide an analysis of the case when $d_{\infty} (\mathcal{P},\mathcal{Q})< \varepsilon$. We will now state a  proposition required for the remaining proofs, the proof of which we relegate to Appendix~\ref{sec:thresholds}.

\begin{restatable}{prop}{thresholds}\label{prop:thresholds}
	Let $\mathcal{P},\mathcal{Q}$ be distributions and let $p \sim \mathcal{P}$ and $q \sim \mathcal{Q}$. Then, 
	\begin{enumerate}
		\item If $d_{\infty}(\mathcal{P},\mathcal{Q})  < \varepsilon $ then 
		\begin{align*}
			\frac{\mathcal{Q}(p)}{\mathcal{Q}(p) + \mathcal{Q}(q)}  \geq \frac{\mathcal{P}(p)}{\mathcal{P}(p) + (1+\frac{2\varepsilon}{1-\varepsilon})\mathcal{P}(q)}
		\end{align*}
		
		\item If $d_{TV}(\mathcal{P},\mathcal{Q})  > \varepsilon_1$,  then for $0\leq \alpha < \varepsilon_1$, with probability at least $(d_{TV}(\mathcal{P},\mathcal{Q})-\alpha)/2$,
		\begin{align*}
			\frac{\mathcal{Q}(p)}{\mathcal{Q}(p) + \mathcal{Q}(q)}  < \frac{\mathcal{P}(p)}{\mathcal{P}(p) + (1+\alpha)\mathcal{P}(q)}
		\end{align*}
	\end{enumerate}
\end{restatable}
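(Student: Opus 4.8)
The plan is to treat the two parts separately. Part~1 is a direct calculation from the $d_\infty$ hypothesis, and part~2 becomes a short moment inequality once the claim is rephrased in terms of the likelihood ratio $\rho(i):=\mathcal{Q}(i)/\mathcal{P}(i)$. For part~1 I would rewrite $d_\infty(\mathcal{P},\mathcal{Q})<\varepsilon$ as $(1-\varepsilon)\mathcal{P}(i)\le\mathcal{Q}(i)\le(1+\varepsilon)\mathcal{P}(i)$ for every $i$ (in particular $\mathcal{Q}(p)\ge(1-\varepsilon)\mathcal{P}(p)>0$, as $\varepsilon<1$), express the left side as $\mathcal{Q}(p)/(\mathcal{Q}(p)+\mathcal{Q}(q))=\bigl(1+\mathcal{Q}(q)/\mathcal{Q}(p)\bigr)^{-1}$, and upper-bound $\mathcal{Q}(q)/\mathcal{Q}(p)\le\tfrac{1+\varepsilon}{1-\varepsilon}\cdot\tfrac{\mathcal{P}(q)}{\mathcal{P}(p)}=\bigl(1+\tfrac{2\varepsilon}{1-\varepsilon}\bigr)\tfrac{\mathcal{P}(q)}{\mathcal{P}(p)}$ by combining $\mathcal{Q}(q)\le(1+\varepsilon)\mathcal{P}(q)$ with $\mathcal{Q}(p)\ge(1-\varepsilon)\mathcal{P}(p)$; substituting back gives the claimed inequality, which in fact holds for every pair $p,q$ in the support.

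For part~2, cross-multiplying (all denominators are positive) shows the target inequality is equivalent to $(1+\alpha)\rho(p)<\rho(q)$; draws $q$ with $\mathcal{P}(q)=0$ make the right side equal $1$ and hence satisfy it automatically, so I would carry out the main argument under $\mathcal{Q}\ll\mathcal{P}$. Let $X,Y$ be two independent copies of the random variable $\rho(p')$ with $p'\sim\mathcal{P}$. Three identities drive everything: (i) $\expect[Y]=\sum_i\mathcal{P}(i)\rho(i)=\sum_i\mathcal{Q}(i)=1$; (ii) for any $\mu\ge0$, $\mathcal{Q}(\{i:\rho(i)>\mu\})=\sum_{\rho(i)>\mu}\rho(i)\mathcal{P}(i)=\expect[Y\mathbbm{1}\{Y>\mu\}]$; (iii) $d_{TV}(\mathcal{P},\mathcal{Q})=\sum_i(\mathcal{P}(i)-\mathcal{Q}(i))_+=\expect[(1-Y)_+]$, which by (i) equals $\tfrac12\expect|Y-1|$. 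Using (ii), $\Pr_{p\sim\mathcal{P},\,q\sim\mathcal{Q}}\bigl[(1+\alpha)\rho(p)<\rho(q)\bigr]=\expect_X\bigl[\mathcal{Q}(\{\rho>(1+\alpha)X\})\bigr]=\expect\bigl[Y\mathbbm{1}\{Y>(1+\alpha)X\}\bigr]$.

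It remains to lower-bound $\expect[Y\mathbbm{1}\{Y>(1+\alpha)X\}]$. Since $X\ge0$, on the event $Y>(1+\alpha)X$ we have $Y\ge Y-(1+\alpha)X>0$, so $Y\mathbbm{1}\{Y>(1+\alpha)X\}\ge(Y-(1+\alpha)X)_+$; and the elementary inequality $(a-b)_+\ge a_+-b$ for $b\ge0$ (taken with $a=Y-X$, $b=\alpha X$) gives $(Y-(1+\alpha)X)_+\ge(Y-X)_+-\alpha X$. Taking expectations and using $\expect[X]=1$, the quantity is at least $\expect[(Y-X)_+]-\alpha$. Now $\expect[(Y-X)_+]=\tfrac12\expect|X-Y|$ by exchangeability of $X,Y$, and Jensen applied to the convex map $y\mapsto|X-y|$ yields $\expect_Y|X-Y|\ge|X-\expect Y|=|X-1|$, whence $\expect|X-Y|\ge\expect|X-1|=2\,d_{TV}(\mathcal{P},\mathcal{Q})$ by (iii). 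Putting the pieces together, $\Pr[(1+\alpha)\rho(p)<\rho(q)]\ge d_{TV}(\mathcal{P},\mathcal{Q})-\alpha$, and since $\alpha<\varepsilon_1<d_{TV}(\mathcal{P},\mathcal{Q})$ this is at least $(d_{TV}(\mathcal{P},\mathcal{Q})-\alpha)/2$; the bound we actually obtain is stronger than stated.

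The hard part is step~2 and, specifically, obtaining a bound that is \emph{linear} in $d_{TV}$ rather than quadratic: the naive split ``$\rho(p)\le1$ and $\rho(q)>1+\alpha$'' factors into a product of two probabilities, each only guaranteed to be of order $d_{TV}$, which leaves the useless bound $\Omega(d_{TV}^2)$. The remedy is to keep $\mathcal{Q}(\{\rho>\mu\})=\expect[Y\mathbbm{1}\{Y>\mu\}]$ intact — this term is close to $1$, not to $d_{TV}$, when $\rho(p)$ is small — and to let the two i.i.d.\ mean-$1$ variables $X,Y$ reduce the whole statement to one moment inequality. Two loose ends: all the manipulations above use $(\cdot)_+$ rather than strict set identities, so they are insensitive to the discreteness of $\mathcal{P}$ and $\mathcal{Q}$, and the strictness of the conclusion is inherited from the strict event $Y>(1+\alpha)X$; and when $\mathcal{Q}\not\ll\mathcal{P}$, conditioning on whether $q$ lands in $\{\mathcal{P}>0\}$ and applying the absolutely-continuous bound to the conditioned distribution, one checks that the probability is still at least $(d_{TV}(\mathcal{P},\mathcal{Q})-\alpha)/2$ — which is the reason the factor $\tfrac12$ appears in the statement.
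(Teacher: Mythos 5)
Your proof is correct, and for part 2 it takes a genuinely different route from the paper's; part 1 is the same one-line manipulation the paper uses. For part 2 the paper decomposes the domain by likelihood ratio into sets $H_1=\{\mathcal{Q}/\mathcal{P}\ge 1+\alpha\}$, $H_0$, $L_0$, $L_1=\{\mathcal{Q}/\mathcal{P}\le 1-\alpha\}$, splits into the cases $\mathcal{P}(L)\ge 1/2$ and $\mathcal{P}(H)>1/2$, and in each case multiplies $\Pr[p\in L]\ge 1/2$ (resp.\ $\Pr[q\in H]\ge 1/2$) by a mass bound of $d_{TV}(\mathcal{P},\mathcal{Q})-\alpha$ for the other sample; that case split is where the paper's factor $1/2$ comes from, not (as you surmise) from handling $\mathcal{Q}\not\ll\mathcal{P}$. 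You instead cross-multiply to restate the event as $(1+\alpha)\rho(p)<\rho(q)$ with $\rho=\mathcal{Q}/\mathcal{P}$, write its probability as $\expect[Y\mathbbm{1}\{Y>(1+\alpha)X\}]$ for i.i.d.\ mean-one copies $X,Y$ of $\rho(p')$, $p'\sim\mathcal{P}$, and chain positive-part inequalities, exchangeability and Jensen to obtain the lower bound $d_{TV}(\mathcal{P},\mathcal{Q})-\alpha$ under $\mathcal{Q}\ll\mathcal{P}$; your identities (i)--(iii) and each step check out, so in the absolutely continuous case you actually prove a bound stronger than the stated one by a factor of two. What the paper's argument buys is purely elementary bookkeeping; what yours buys is the sharper constant and a transparent explanation of why the bound is linear rather than quadratic in $d_{TV}$ (keeping $\mathcal{Q}(\{\rho>\mu\})=\expect[Y\mathbbm{1}\{Y>\mu\}]$ intact instead of splitting into two order-$d_{TV}$ events). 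The one place you are too brief is the $\mathcal{Q}\not\ll\mathcal{P}$ case: the reduction does go through --- with $\beta=\mathcal{Q}(\{\mathcal{P}=0\})$ the event holds automatically when $q$ lands there, conditioning $\mathcal{Q}$ on $\{\mathcal{P}>0\}$ rescales $\rho$ by $1/(1-\beta)$ on both sides so the event is unchanged, $d_{TV}(\mathcal{P},\mathcal{Q}_{\{\mathcal{P}>0\}})\ge d_{TV}(\mathcal{P},\mathcal{Q})-\beta$, and $\beta+(1-\beta)\max\bigl(0,\,d_{TV}(\mathcal{P},\mathcal{Q})-\alpha-\beta\bigr)\ge (d_{TV}(\mathcal{P},\mathcal{Q})-\alpha)/2$ for every $\beta\in[0,1]$ --- but these three facts should be stated and verified rather than waved at; note also that in $\insidebucket$ the proposition is only applied to bucket-conditioned pairs $(\mathcal{P}_{S_j},\mathcal{Q}_{S_j})$ with $j\ge 1$, where $\mathcal{P}_{S_j}$ is strictly positive on all of $S_j$, so the absolutely continuous case is the one that actually matters for the algorithm.
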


From our assumption (\ref{line:allcorrect}), we know that for all invocations of  {\bias}, with probability at least $1-\delta/4$, $\left|\widehat{c_{pq}} - \frac{\mathcal{Q}(p)}{\mathcal{Q}(p) + \mathcal{Q}(q)}\right| \leq (h-\ell)/2$. Using Prop.~\ref{prop:thresholds}, and using the value of $h$ given on Line~\ref{line:h}, we can see that $\frac{\mathcal{Q}(p)}{\mathcal{Q}(p) + \mathcal{Q}(q)} > h$. From this we can observe that for all invocations of {\bias}, $\widehat{c_{pq}}> (h+\ell)/2$ and the test does not return $\reject$ in any iteration, hence eventually returning $\accept$. Thus, in the case that $d_\infty(\mathcal{P},\mathcal{Q}) < \varepsilon$, the $\insidebucket$ subroutine returns $\accept$ with probability at least $1-\delta/4$.
 
\subsubsection{The $\reject$ case}
In this section we analyse the case when $d_{TV}(\mathcal{P},\mathcal{Q})\geq \eta$ and $d_{TV}(B_\mathcal{P},B_\mathcal{Q}) \leq \varepsilon_2$ and we will show that the algorithm returns $\reject$ with probability at least $1-\delta$. For the purpose of the proof we will define a set of bad buckets  $Bad \subseteq [k] $. Note that bucket $\{0\}$ is not in $Bad$. 
\begin{definition}
	$Bad = \{i\in [k]: d_{TV}(\mathcal{P}_{S_i},\mathcal{Q}_{S_i}) >\varepsilon_1 \wedge B_\mathcal{P}(i)/B_\mathcal{Q}(i)  \in [5^{-1},2]\}$
\end{definition}

 Suppose we have an indicator variable $X_{r,s}$ constructed as follows:  draw  $m$ samples from  $\mathcal{P}$ and $\mathcal{Q}$, and if the  $r^{th}$ sample from $\mathcal{P}$ and the $s^{th}$ sample from $\mathcal{Q}$ both belong to some bucket $b \in Bad$, then $X_{r,s}=1$  else $X_{r,s} =0$.  Then, 
 \begin{align*}
 \expect[X_{r,s}] =   \sum_{b \in Bad }B_\mathcal{P}(b)B_\mathcal{Q}(b) >  \frac{(B_\mathcal{P}(Bad)+B_\mathcal{Q}(Bad) - 2d_{TV(Bad)}(B_\mathcal{P},B_\mathcal{Q}))^2}{4K} 
 \end{align*}
   The  inequality is by the application of  Prop.~\ref{prop:overlaplowerbound}.
 	
We analyse the expression for the expectation in the following lemma, the proof of which we relegate to  Appendix~\ref{sec:boundonexp}
\begin{restatable}{lem}{boundonexp}\label{lem:boundonexp}
 			\begin{align*}
B_\mathcal{Q}(Bad) + B_\mathcal{P}(Bad) - 2d_{TV(Bad)}(B_\mathcal{Q},B_\mathcal{P})  
> 2\left(0.99\eta- \frac{13}{4}\varepsilon_2 - \varepsilon_1\right)
 			\end{align*}
\end{restatable}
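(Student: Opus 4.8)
The plan is to show that, under the hypotheses $d_{TV}(\mathcal{P},\mathcal{Q})\ge\eta$ and $d_{TV}(B_\mathcal{P},B_\mathcal{Q})\le\varepsilon_2$, essentially all of the total-variation mass between $\mathcal{P}$ and $\mathcal{Q}$ must live \emph{inside} the buckets of $Bad$, because every other contributor to $d_{TV}(\mathcal{P},\mathcal{Q})$ is provably $O(\varepsilon_1+\varepsilon_2+\eta/100)$. First I would rewrite the left-hand side using the identity $a+b-|a-b|=2\min(a,b)$:
\[
B_\mathcal{Q}(Bad)+B_\mathcal{P}(Bad)-2d_{TV(Bad)}(B_\mathcal{Q},B_\mathcal{P}) \;=\; 2\sum_{i\in Bad}\min\bigl(B_\mathcal{P}(i),B_\mathcal{Q}(i)\bigr).
\]
Writing $M:=\sum_{i\in Bad}\min(B_\mathcal{P}(i),B_\mathcal{Q}(i))$, the target reduces to $M>0.99\eta-\tfrac{13}{4}\varepsilon_2-\varepsilon_1$.

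The crux is a within-bucket decomposition of TV distance. For each bucket $S_i$, $0\le i\le k$, set $m_i:=\min(B_\mathcal{P}(i),B_\mathcal{Q}(i))$; expanding $\mathcal{P}(b)-\mathcal{Q}(b)=B_\mathcal{P}(i)\mathcal{P}_{S_i}(b)-B_\mathcal{Q}(i)\mathcal{Q}_{S_i}(b)$ for $b\in S_i$ and applying the triangle inequality centered at whichever of $B_\mathcal{P}(i),B_\mathcal{Q}(i)$ is smaller gives
\[
d_{TV(S_i)}(\mathcal{P},\mathcal{Q}) \;\le\; m_i\, d_{TV}(\mathcal{P}_{S_i},\mathcal{Q}_{S_i}) + \tfrac12\bigl|B_\mathcal{P}(i)-B_\mathcal{Q}(i)\bigr|.
\]
Summing over all $i$ and using that $\{S_0,\dots,S_k\}$ partitions $\bn$ (so the left sides add to $d_{TV}(\mathcal{P},\mathcal{Q})\ge\eta$ and the $|B_\mathcal{P}(i)-B_\mathcal{Q}(i)|$ terms add to $2d_{TV}(B_\mathcal{P},B_\mathcal{Q})\le2\varepsilon_2$), I get $\sum_{i=0}^{k} m_i\, d_{TV}(\mathcal{P}_{S_i},\mathcal{Q}_{S_i}) \ge \eta-\varepsilon_2$.

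Then I would split $\{0,\dots,k\}$ into four classes and bound the matching part of this sum. (i)~Bucket $0$: since $S_0=\{b:\mathcal{P}(b)\le 2^{-k}\}$, $k=n+\lceil\log_2(100/\eta)\rceil$, and $|S_0|<2^n$ (some element has $\mathcal{P}$-mass $>2^{-k}$), the term $m_0\,d_{TV}(\mathcal{P}_{S_0},\mathcal{Q}_{S_0})\le m_0\le\mathcal{P}(S_0)<2^{n-k}\le\eta/100$. (ii)~\emph{Internally close} buckets $i\in[k]\setminus Bad$ with $d_{TV}(\mathcal{P}_{S_i},\mathcal{Q}_{S_i})\le\varepsilon_1$: their part is $\le\varepsilon_1\sum_{i\in[k]}m_i\le\varepsilon_1\sum_{i=0}^{k}B_\mathcal{P}(i)=\varepsilon_1$. (iii)~\emph{Imbalanced} buckets $i\in[k]\setminus Bad$ with $d_{TV}(\mathcal{P}_{S_i},\mathcal{Q}_{S_i})>\varepsilon_1$ --- which forces $B_\mathcal{P}(i)/B_\mathcal{Q}(i)\notin[1/5,2]$ --- satisfy, by a short two-case check (ratio $>2$ vs.\ ratio $<1/5$), $m_i<|B_\mathcal{P}(i)-B_\mathcal{Q}(i)|$, so using $d_{TV}(\mathcal{P}_{S_i},\mathcal{Q}_{S_i})\le1$ their part is $<\sum_{i\in[k]}|B_\mathcal{P}(i)-B_\mathcal{Q}(i)|\le2\varepsilon_2$. (iv)~The remaining indices are precisely $Bad$, whose part is $\le\sum_{i\in Bad}m_i=M$. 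Combining, $\eta-\varepsilon_2<\eta/100+\varepsilon_1+2\varepsilon_2+M$, i.e.\ $M>0.99\eta-3\varepsilon_2-\varepsilon_1\ge0.99\eta-\tfrac{13}{4}\varepsilon_2-\varepsilon_1$, and doubling yields the lemma; the gap between $3\varepsilon_2$ and $\tfrac{13}{4}\varepsilon_2$ is slack, and strictness already comes from the bucket-$0$ estimate ($\mathcal{P}$ cannot place all its mass in $S_0$).

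The step I expect to need most care is the within-bucket decomposition together with the imbalanced-bucket case analysis: centering the triangle inequality at the \emph{smaller} of $B_\mathcal{P}(i),B_\mathcal{Q}(i)$ (not just $B_\mathcal{P}(i)$) is what keeps the total cost of the internally-close buckets at $\varepsilon_1$ rather than something uncontrolled, and the thresholds $1/5$ and $2$ in the definition of $Bad$ are exactly what is needed to force $m_i<|B_\mathcal{P}(i)-B_\mathcal{Q}(i)|$ on imbalanced buckets. One also has to dispose separately of degenerate buckets ($B_\mathcal{P}(i)=0$ or $B_\mathcal{Q}(i)=0$, where a conditional distribution is undefined), but there $m_i=0$ and every bound above is trivial.
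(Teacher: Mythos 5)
Your proof is correct, but it takes a genuinely different route from the paper's. The paper introduces the mixture distribution $\mathcal{PQ}$ (sample a bucket $j\sim B_\mathcal{Q}$, then sample within it from $\mathcal{P}_{S_j}$), shows $d_{TV}(\mathcal{P},\mathcal{PQ})=d_{TV}(B_\mathcal{P},B_\mathcal{Q})\le\varepsilon_2$ and $d_{TV}(\mathcal{Q},\mathcal{PQ})=\sum_j B_\mathcal{Q}(j)\,d_{TV}(\mathcal{P}_{S_j},\mathcal{Q}_{S_j})$, and runs the triangle inequality with this $B_\mathcal{Q}$-weighted decomposition; because those weights are asymmetric, it then needs the auxiliary sets $R_1=\{j:B_\mathcal{P}(j)>2B_\mathcal{Q}(j)\}$ and $R_2=\{j:B_\mathcal{Q}(j)>5B_\mathcal{P}(j)\}$, lower-bounds $B_\mathcal{Q}(Bad)$ first, and only at the end converts to the symmetric left-hand side via a $Bad^{+}/Bad^{-}$ split, which is where the $\tfrac{5}{4}$ factor and hence the $\tfrac{13}{4}\varepsilon_2$ arise. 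You instead note that the left-hand side equals $2\sum_{i\in Bad}\min\bigl(B_\mathcal{P}(i),B_\mathcal{Q}(i)\bigr)$ and prove the symmetric, min-weighted per-bucket bound $d_{TV(S_i)}(\mathcal{P},\mathcal{Q})\le \min\bigl(B_\mathcal{P}(i),B_\mathcal{Q}(i)\bigr)d_{TV}(\mathcal{P}_{S_i},\mathcal{Q}_{S_i})+\tfrac12|B_\mathcal{P}(i)-B_\mathcal{Q}(i)|$, after which bucket $0$, the $\varepsilon_1$-close buckets, and the imbalanced buckets (where the thresholds $2$ and $1/5$ force $\min\le|B_\mathcal{P}(i)-B_\mathcal{Q}(i)|$) are absorbed directly, with no mixture distribution and no $R_1,R_2,Bad^{\pm}$ bookkeeping. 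Both arguments rest on the same essential ingredients (the $\mathcal{P}(S_0)\le\eta/100$ estimate, the ratio thresholds in the definition of $Bad$, and a within-bucket TV decomposition), but yours is cleaner and slightly sharper: you obtain $0.99\eta-3\varepsilon_2-\varepsilon_1$, which implies the stated bound since $3\le\tfrac{13}{4}$, and your treatment of degenerate buckets and of strictness via the bucket-$0$ estimate is sound.
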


Using  	Lemma~\ref{lem:boundonexp} we immediately derive the fact that $\expect[X_{r,s}] >\left(0.99\eta- \frac{13}{4}\varepsilon_2 - \varepsilon_1\right)^2/K $. Let $X = \sum_{r,s \in [m]}X_{r,s}$.  Given $m$ samples from $\mathcal{P}$ and $\mathcal{Q}$, $\Pr(X\geq 1)$ is the probability that there is at least one bucket in $Bad$ that is sampled at least once each in both sets of samples. 
\begin{lem}\label{lem:expectationofxsquare}
$\Pr( X \geq 1) > 1/5$
\end{lem}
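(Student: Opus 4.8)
The plan is to prove $\Pr(X \ge 1) > 1/5$ through the second-moment inequality of Proposition~\ref{prop:cheby}. Since $X$ is a bounded, nonnegative, integer-valued random variable, the event $|X - \expect[X]| < \expect[X]$ equals $\{0 < X < 2\expect[X]\}$ and is therefore contained in $\{X \ge 1\}$, so it suffices to show $\expect[X^2] < 5\,\expect[X]^2$: Proposition~\ref{prop:cheby} then gives $\Pr(X \ge 1) \ge \Pr(|X - \expect[X]| < \expect[X]) > \expect[X]^2/\expect[X^2] > 1/5$.

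First I would record the two quantitative facts that drive the estimate. Write $\mu := \expect[X_{r,s}] = \sum_{b \in Bad} B_\mathcal{P}(b) B_\mathcal{Q}(b)$, so $\expect[X] = m^2 \mu$. Proposition~\ref{prop:overlaplowerbound} applied with $S = Bad$, together with Lemma~\ref{lem:boundonexp}, yields $\mu > \gamma^2/|Bad|$, where $\gamma := 0.99\eta - \tfrac{13}{4}\varepsilon_2 - \varepsilon_1 > 0$ is the denominator in the definition of $m$ on Line~\ref{line:m}, which in particular gives $m \ge \sqrt{k}/\gamma$. Combining the two, $m^2\mu > k/|Bad| \ge 1$ and $m\sqrt{\mu} > \sqrt{k/|Bad|} \ge 1$; here I also observe $1 \le |Bad| \le k$ (Lemma~\ref{lem:boundonexp} forces $Bad \neq \emptyset$) and that $B_\mathcal{P}(b), B_\mathcal{Q}(b) > 0$ for every $b \in Bad$ by the $[5^{-1},2]$ ratio constraint in the definition of $Bad$, so that all divisions below are legitimate.

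Next I would expand $\expect[X^2] = \sum_{r,s,r',s'} \expect[X_{r,s} X_{r',s'}]$ by splitting on whether $r = r'$ and whether $s = s'$. Because $X_{r,s}$ depends only on the $r$-th sample from $\mathcal{P}$ and the $s$-th sample from $\mathcal{Q}$, and the $\mathcal{P}$- and $\mathcal{Q}$-sample families are independent and each i.i.d., the four cases contribute $m^2\mu$ (using that $X_{r,s}$ is $0$-$1$), $m^2(m-1)A$, $m^2(m-1)B$, and $m^2(m-1)^2\mu^2$, where $A := \sum_{b \in Bad} B_\mathcal{P}(b)B_\mathcal{Q}(b)^2$ and $B := \sum_{b \in Bad} B_\mathcal{P}(b)^2 B_\mathcal{Q}(b)$. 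Dividing by $\expect[X]^2 = m^4\mu^2$,
\[
\frac{\expect[X^2]}{\expect[X]^2} \;=\; \frac{1}{m^2\mu} \;+\; \frac{(m-1)(A+B)}{m^2\mu^2} \;+\; \frac{(m-1)^2}{m^2},
\]
and the first and third summands are each strictly below $1$ by the estimates of the previous paragraph.

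The main obstacle is bounding the middle summand, i.e.\ controlling $A + B$, and this is precisely where the $[5^{-1},2]$ constraint in the definition of $Bad$ is used. For $b \in Bad$, setting $t := B_\mathcal{P}(b)/B_\mathcal{Q}(b) \in [5^{-1}, 2]$, one has $B_\mathcal{P}(b)B_\mathcal{Q}(b)^2 + B_\mathcal{P}(b)^2 B_\mathcal{Q}(b) = (B_\mathcal{P}(b)B_\mathcal{Q}(b))^{3/2}\bigl(\sqrt{t} + 1/\sqrt{t}\bigr) \le \tfrac{6}{\sqrt 5}\,(B_\mathcal{P}(b)B_\mathcal{Q}(b))^{3/2}$, since $\sqrt t + 1/\sqrt t$ is maximized over $[5^{-1},2]$ at $t = 5^{-1}$. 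Summing over $b \in Bad$ and using the sub-additivity $\sum_b c_b^{3/2} \le \bigl(\sum_b c_b\bigr)^{3/2}$ for nonnegative $c_b$, we get $A + B \le \tfrac{6}{\sqrt 5}\,\mu^{3/2}$, whence the middle summand is at most $\tfrac{A+B}{m\mu^2} \le \tfrac{6}{\sqrt 5\, m\sqrt\mu} < \tfrac{6}{\sqrt 5}$. Altogether $\expect[X^2]/\expect[X]^2 < 1 + \tfrac{6}{\sqrt 5} + 1 < 5$, and Proposition~\ref{prop:cheby} finishes the proof. Beyond this, the only points requiring care are the case analysis for $\expect[X^2]$ (using the independence correctly) and the elementary check that $\sqrt t + 1/\sqrt t \le 6/\sqrt 5$ on $[5^{-1},2]$.
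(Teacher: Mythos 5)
Your proof is correct and follows essentially the same route as the paper's: a second-moment (Chebyshev/Paley--Zygmund) argument with the identical four-case expansion of $\expect[X^2]$, the same use of the $[5^{-1},2]$ ratio constraint together with monotonicity of $\ell_p$ norms to bound the cross terms, and the same inequality $m^2\expect[X_{r,s}]\geq 1$ derived from Lemma~\ref{lem:boundonexp}, Proposition~\ref{prop:overlaplowerbound}, and the choice of $m$. The only difference is presentational (you bound $\expect[X^2]/\expect[X]^2$ term by term with the sharper constant $6/\sqrt{5}$ rather than bounding the reciprocal with the rounded constant $3$), and your explicit remarks that $Bad\neq\emptyset$ and that the divisions are legitimate are welcome refinements.
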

The proof can be found in Appendix~\ref{sec:oneround}. \qed

Henceforth we will condition on the the event that $X\geq 1$. In such a case, we know that for some $k \in Bad$, there is a sample $p \sim \mathcal{P}_{S_k}$ and a sample $q \sim \mathcal{Q}_{S_k}$. Then for such a pair of samples $(p,q)$, and some $\alpha$, Prop.~\ref{prop:thresholds} tells us that with probability at least $(d_{TV}(\mathcal{P},\mathcal{Q})-\alpha)/2$ we have
\begin{align*}
	\frac{\mathcal{Q}(p)}{\mathcal{Q}(p) + \mathcal{Q}(q)}  < \frac{\mathcal{P}(p)}{\mathcal{P}(p) + (1+\alpha)\mathcal{P}(q)}
\end{align*}
 
 Using the assumption made in~(\ref{line:allcorrect}), we immediately have that $\widehat{c_{pq}}  \leq  \frac{\mathcal{Q}(p)}{\mathcal{Q}(p) + \mathcal{Q}(q)} + \frac{h-\ell}{2} $. But from Prop.~\ref{prop:thresholds} we have that $\frac{\mathcal{Q}(p)}{\mathcal{Q}(p) + \mathcal{Q}(q)} < \ell $  and hence $\widehat{c_{pq}}< (h+\ell)/2$.  Since $d_{TV}(\mathcal{P},\mathcal{Q})\geq \varepsilon_1$, we see that if $X\geq 1$, then with probability at least $(\varepsilon_1-\alpha)/2$, the iteration returns $\reject$.  
 
 Then, using Lemma~\ref{lem:expectationofxsquare} we see that in every iteration, with probability at least  $(\varepsilon_1-\alpha)/10$, $\insidebucket$ returns $\reject$. There are $t$ iterations, where $t$ (line 4) is chosen such that the overall probability of the test returning $\reject$ is at least $1-\delta/2$.

\subsection{The $\outsidebucket$ subroutine}\label{sec:outsidebucket}

The $\outsidebucket$ subroutine takes as input two distributions $\mathcal{D}_1,\mathcal{D}_2$ over $k+1$ elements and two parameters $ \theta $ and $\delta$. Then with probability at least $1-\delta$, $\insidebucket$ returns a $\theta$-multiplicative estimate for $d_{TV}(\mathcal{D}_1,\mathcal{D}_2) $. 

The $\outsidebucket$ starts by drawing $\max \left(\frac{4(k+1)}{\theta^2},\frac{8\ln(4/\delta)}{\theta^2}\right)$ samples from the two distributions $\mathcal{D}_1$ and $\mathcal{D}_2$, and constructs the empirical distributions $\widehat{\mathcal{D}_1}$ and $\widehat{\mathcal{D}_2}$. Then from Prop.~\ref{prop:learning}, we know that  with probability at least $1-\delta$, both $d_{TV}(\mathcal{D}_1,\widehat{\mathcal{D}_1}) \leq  \theta/2 $ and  $d_{TV}(\mathcal{D}_2,\widehat{\mathcal{D}_2}) \leq  \theta/2 $. 

From the triangle inequality we have that,
\begin{align*}
	d_{TV}(\widehat{\mathcal{D}_1},\widehat{\mathcal{D}_2}) &\leq d_{TV}(\mathcal{D}_1,\widehat{\mathcal{D}_1}) + d_{TV}(\mathcal{D}_2,\widehat{\mathcal{D}_2})+ d_{TV}(\mathcal{D}_1,\mathcal{D}_2) < \theta +d_{TV}(\mathcal{D}_1,\mathcal{D}_2)
\end{align*}
and also that,
\begin{align*}
 d_{TV}(\mathcal{D}_1,\mathcal{D}_2)&\leq d_{TV}(\mathcal{D}_1,\widehat{\mathcal{D}_1}) + d_{TV}(\mathcal{D}_2,\widehat{\mathcal{D}_2})+ 	d_{TV}(\widehat{\mathcal{D}_1},\widehat{\mathcal{D}_2}) < \theta +d_{TV}(\widehat{\mathcal{D}_1},\widehat{\mathcal{D}_2})
\end{align*}
 Thus with probability at least $1-\delta$, the returned estimate  $d_{TV}(\widehat{\mathcal{D}_1},\widehat{\mathcal{D}_2}) $ satisfies  $|d_{TV}(\widehat{\mathcal{D}_1},\widehat{\mathcal{D}_2}) -  d_{TV}(\mathcal{D}_1,\mathcal{D}_2)| < \theta $.

\paragraph{Query and runtime complexity} The number of queries made by $\outsidebucket$ to $\mathcal{P}$ and $\mathcal{Q}$ is given by $\tilde{O}\left(\frac{n}{\eta^{2}}\right)$, where $\tilde{O}$ hides polylog factors of $\varepsilon, \eta$ and $\delta$. The number of queries required by $\insidebucket$ is given by $mtr$. Bounding the terms individually, we see that $m =\tilde{O}\left(\frac{\sqrt{ n}}{\eta - 11.6 \varepsilon }\right)$, $t = \tilde{O}\left(  \frac{1}{\eta}\right)$ and $r = \tilde{O}\left(\frac{  \log n}{\eta^2}\right)$. Thus $mtr = \tilde{O}\left(\frac{\sqrt{ n}\log  n}{(\eta - 11.6\varepsilon)\eta^3} \right)$ and hence the total query complexity is $\tilde{O}\left(\frac{\sqrt{n} \log n}{(\eta - 11.6\varepsilon)\eta^3} + \frac{n}{\eta^{2}}\right)$.
\section{Evaluation}\label{sec:evaluation}
To evaluate the performance of $\barbarikthree$ and test the quality of publicly available samplers, we implemented $\barbarikthree$ in Python. Our evaluation took inspiration from the experiments presented in previous work ~\cite{CM19,MPC20}, and we used the same framework to evaluate our proposed algorithm. The role of target distribution $\mathcal{P}$ was played by \waps\footnote{\href{https://github/com/meelgroup/WAPS}{https://github.com/meelgroup/WAPS}}~\cite{GSRM19}. {\waps} compiles the input Boolean formula into a representation that allows exact sampling and exact probability computation, thereby implementing the {\samp} and {\eval} oracles needed for our test.

For the role of sampler $\mathcal{Q}(\varphi, \mathtt{w})$, we used the state-of-the-art samplers $\wsts$ and $\wunigen$. $\wunigen$~\cite{SGM20} is a hashing-based sampler that provides $(\varepsilon, \delta)$ guarantees on the quality of the samples. $\wsts$~\cite{EGS12} is a sampler designed for sampling over challenging domains such as energy barriers and highly asymmetric spaces. $\wsts$ generates samples much faster than $\wunigen$, albeit without any guarantees on the quality of the samples. To implement {\pcond} access, we use the $\mathsf{Kernel}$ construction from~\cite{CM19}.  $\mathsf{Kernel}$ takes in $\varphi$ and two assignments $\sigma_1, \sigma_2$, and returns a function $\widehat{\varphi}$ on $m$ variables, such that:
 (1) $m>n$, (2) $\varphi$ and $\widehat{\varphi}$ are similar in structure,  and (3) for $\sigma \in \widehat{\varphi}^{-1}(1)$, it holds that $\sigma _{\downarrow supp(\varphi)} \in \{\sigma_1, \sigma_2\}$. Here $\sigma_{\downarrow supp(\varphi)}$ denotes the projection of $\sigma$ on the variables of $\varphi$.

For the closeness($\varepsilon$), farness$(\eta)$, and confidence$(\delta)$ parameters, we choose the values $0.05,0.9$ and $0.2$. This setting implies that for a given distribution $\mathcal{P}$, and for a given sampler  $\mathcal{Q}(\varphi,\mathtt{w})$, $\barbarikthree$ returns (1) $\accept$ if  $d_{\infty}(\mathcal{P},\mathcal{Q}(\varphi,\mathtt{w})) < 0.05$, and (2) $\reject$ if  $d_{TV}(\mathcal{P},\mathcal{Q}(\varphi,\mathtt{w})) > 0.9$, with probability at least $0.8$.  Our empirical evaluation sought to answer the question: How does the performance of $\barbarikthree$ compare with the state-of-the-art tester $\barbariktwo$?

Our experiments were conducted on a high-performance compute cluster with Intel Xeon(R) E5-2690v3@2.60GHz CPU cores. We use a single core with 4GB memory with a timeout of 16 hours for each benchmark. We set a sample limit of $10^8$ samples for our experiments due to our limited computational resources. The complete experimental data along with the running time of instances, is presented in the Appendix~\ref{sec:missingdata}. 

\subsection{Setting A - scalable benchmarks}
\paragraph{Dataset} Our dataset consists of the union of two $n$-dimensional product distributions, for $n \in \{4,7,10,\ldots,118\}$. We have 39 problems in the dataset. %
We represent the union of two product distributions as the constraint: $\varphi(\sigma) = \bigwedge^{2k}_{i=1}(\sigma_{3k+1} \vee \sigma_i) \wedge \bigwedge^{3k}_{i=2k+1}(\lnot \sigma_{3k+1} \vee \sigma_i)$, and the weight function: $\mathtt{w}(\sigma) =  \prod_{i=2k+1}^{3k} 3^{\sigma_i}$, where $\sigma_i$ is the value of $\sigma$ at position $i$. 
\paragraph{Results} We observe that in the case of $\wsts$, $\barbariktwo$ can handle only 12 instances within the sample limit of $10^8$. On the other hand, $\barbarikthree$ can handle all 39 instances using at the most $10^6$ samples. In the case of $\wunigen$, $\barbariktwo$ solves $5$ instances, and $\barbarikthree$ can handle $17$ instances. 

Figure~\ref{fig:scalable} shows a cactus plot comparing the sample requirement of $\barbarikthree$ and $\barbariktwo$. The $x$-axis represents the number of benchmarks and $y$-axis represents the number of samples, a point $(x,y)$ implies that the relevant tester took less than $y$ number of samples to distinguish between $d_{TV}(\mathcal{P},\mathcal{Q}(\varphi, \mathtt{w})) > \eta$ and $d_{\infty}(\mathcal{P},\mathcal{Q}(\varphi, \mathtt{w}))<\varepsilon$, for $x$ many benchmarks. We display the set of benchmarks for which at least one of the two tools terminated within the sample limit of $10^8$. We want to highlight that the $y$-axis is in log-scale, thus showing the sample efficiency of $\barbarikthree$ compared to $\barbariktwo$. For every benchmark, we compute the ratio of the number of samples required by $\barbariktwo$ to test a sampler and the number of samples required by $\barbarikthree$. The geometric mean of these ratios indicates the mean speedup. We find that the $\barbarikthree$'s speedup on $\wsts$ is 451$\times$ and on $\wunigen$ is 10$\times$.

\begin{figure}[h!]
	\centering
	\includegraphics{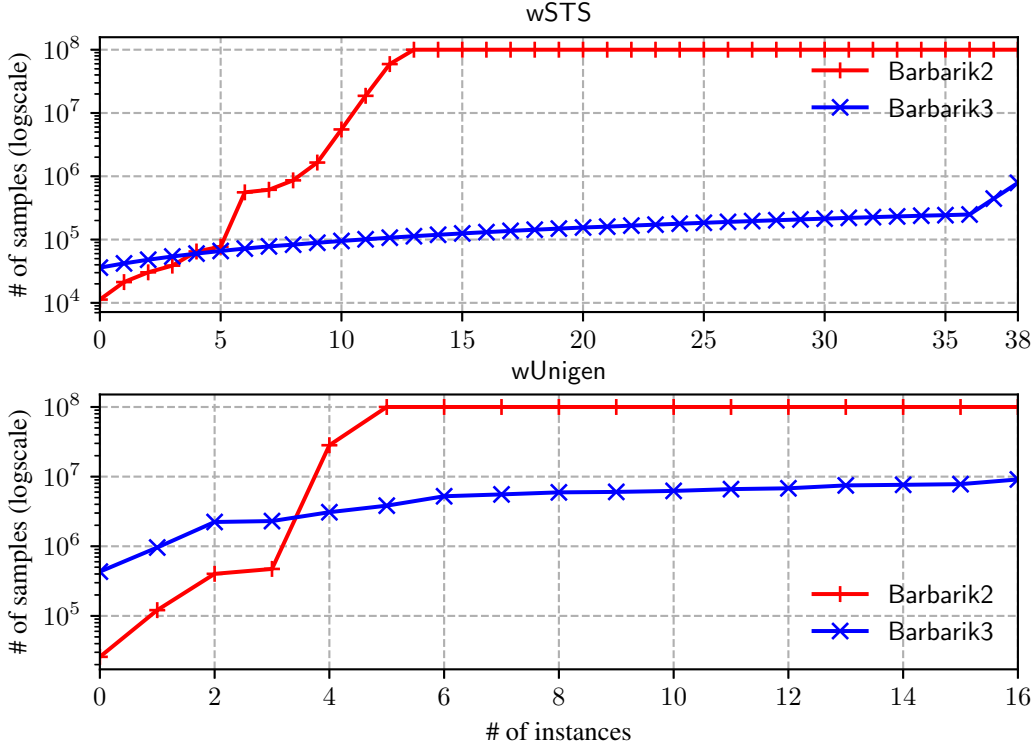}
	\caption{Cactus plot: $\barbarikthree$ vs. $\barbariktwo$. We set the sample limit to be $10^8$, and our dataset consists of 39 benchmarks. The plot shows all the instances where at least one of the two tools terminated within the time limit of 16 hours and sample limit of $10^8$. }
	\label{fig:scalable}
\end{figure}

\subsection{Setting B - real-life benchmarks}
\paragraph{Dataset} We experiment on 87 constraints drawn from a collection of publicly available benchmarks arising from sampling and counting tasks\footnote{\href{https://zenodo.org/record/3793090}{https://zenodo.org/record/3793090}}. We use distributions from the log-linear family. In a log-linear distribution, the probability of an element $\sigma \in \varphi^{-1}(1)$ is given as: $\Pr[\sigma] \propto \exp\left({\sum^n_{i=1} \sigma_i \theta_i}\right)$, where $\theta_i \in \mathbb{R}_{>0}^{n}$. We found that $\wunigen$ was not able to sample from most of the benchmarks in the dataset  within the given time limit, and hence we present the results only for $\wsts$.

\paragraph{Results} 

We find that $\barbarikthree$ terminated with a result on all 87 instances from the set of real-life benchmarks, while $\barbariktwo$ could only terminate on $16$. We present the results of our experiments in Table~\ref{tab:comparision}. The first column indicates the benchmark's name, and the second column has the number of dimensions of the space the distribution is defined on. The third and fifth columns indicate the number of samples required by $\barbariktwo$ and $\barbarikthree$. The fourth and sixth columns report the output of $\barbariktwo$ and $\barbarikthree$.

\begin{table}[h!]	
\centering	
\caption{Runtime performance of $\barbarikthree$. We experiment with 87 benchmarks, and out of the 87 benchmarks we display 15 in the table and we display the full data in  Appendix~\ref{sec:missingdata} . In the table `A' represents $\accept$, `R' represents $\reject$ and `TO' represents that the tester either asked for more than $10^8$ samples or did not terminate in the given time limit of 16 hours.}\label{tab:comparision}
    \begin{tabular}{lccrcr}\\ \toprule && \multicolumn{2}{c}{$\barbariktwo$}&  \multicolumn{2}{c}{$\barbarikthree$}  \\
    	\cmidrule(l){3-4}\cmidrule(l){5-6} 
    	Benchmark&Dimensions & Result &\# of samples & Result &\# of samples \\
		\midrule
SetTest.sk\_9\_21 & 21 & R & 2817 & R & 58000\\
Pollard.sk\_1\_10 & 10 & R & 7606 & R & 36000\\
s444\_3\_2 & 24 & R & 848148 & R & 64000\\
s526a\_3\_2 & 24 & R & 848148 & R & 64000\\
s510\_15\_7 & 25 & R & 12708989 & R & 66000\\
s27\_new\_7\_4 & 7 & A & 23997012 & R & 30000\\
s298\_15\_7 & 17 & R & 38126967 & R & 50000\\
s420\_3\_2 & 34 & TO & - & R & 83000\\
s382\_3\_2 & 24 & TO & - & R & 64000\\
s641\_3\_2 & 54 & TO & - & R & 123000\\
111.sk\_2\_36 & 36 & TO & - & R & 87000\\
7.sk\_4\_50 & 50 & TO & - & R & 115000\\
56.sk\_6\_38 & 38 & TO & - & R & 91000\\
s820a\_15\_7 & 23 & TO & - & R & 62000\\
ProjectService3.sk\_12\_55 & 55 & TO & - & R & 125000\\
		\bottomrule
	\end{tabular}

\end{table}
\raggedbottom

\section{Conclusion}\label{sec:conclusion}
In this paper, we studied the problem of testing constrained samplers over high-dimensional distributions with $(\varepsilon,\eta,\delta)$ guarantees. For $n$-dimensional distributions, the existing state-of-the-art testing algorithm, $\barbariktwo$, has a worst-case query complexity that is exponential in $n$ and hence is not ideal for use in practice. We provided an exponentially faster algorithm, $\barbarikthree$, that has a query complexity linear in $n$ and hence can easily scale to larger instances. We implemented $\barbarikthree$ and tested the samplers $\wsts$ and $\wunigen$ to determine their sample complexity in practice. The results demonstrate that $\barbarikthree$ is significantly more sample efficient than $\barbariktwo$, requiring $450\times$ fewer samples when it tested $\wsts$ and 10$\times$ fewer samples when it tested $\wunigen$. Since there is a $\sqrt{n}$ gap between the upper bound provided by our work and the lower bound shown in~\cite{N21}, the problem of designing a more sample efficient algorithm or finding a stronger lower bound, remains open.

\paragraph{Limitations} For a given farness parameter $\eta$, $\barbarikthree$ requires the value of the closeness parameter $\varepsilon$ to lie in the interval $[0,\eta/11.6)$. In the case of $\barbariktwo$, the previous state-of-the-art test, the permissible values of $\varepsilon$ for a given $\eta$ lie in the interval $[0,\eta/3)$. Thus, $\barbarikthree$ supports testing with only a subset of parameter values that $\barbariktwo$ support.

\begin{ack}
	We are grateful to the anonymous reviewers of  NeurIPS 2022 for their constructive
	feedback that greatly improved the paper. We thank Ayush Jain and Shyam Narayanan for helpful discussions regarding the paper.  We would also like to thank Anna L.D. Latour and Priyanka Golia for their useful comments on the code and the earlier drafts of the paper. 

This work was supported in part by National Research Foundation Singapore under its Campus for Research Excellence and Technological Enterprise (CREATE) programme, NRF Fellowship Programme [NRF-NRFFAI1-2019-0004] and Ministry of Education Singapore Tier 2 grant [MOE-T2EP20121-0011], Ministry of Education Singapore Tier 1 Grant [R-252-000-B59-114]. The computational work for this article was performed on resources of the  \href{https://www.nscc.sg.}{National Supercomputing Centre, Singapore}. 
\end{ack}

\bibliographystyle{plainnat}
\bibliography{neurips_2022}

\newpage
\appendix

\section{Missing proofs and algorithm}\label{sec:missingproof}
\subsection{Proof of  Lemma~\ref{prop:overlaplowerbound}}\label{sec:overlap}
\begin{proof}
	The Hellinger distance of  distributions $\mathcal{P},\mathcal{Q}$ restricted to  a set $S \subseteq \bn$, is defined as $d_{H(S)}(\mathcal{P},\mathcal{Q}) =  \frac{1}{\sqrt{2}}\sqrt{\sum_{i \in S} (\sqrt{\mathcal{Q}(i)}- \sqrt{\mathcal{P}(i)})^2}$, 
	\begin{align*}
		d_{H(S)}(\mathcal{P},\mathcal{Q}) &=  \frac{1}{\sqrt{2}}\sqrt{\sum_{i \in S} (\sqrt{\mathcal{Q}(i)}- \sqrt{\mathcal{P}(i)})^2}\\
		d^2_{H(S)}(\mathcal{P},\mathcal{Q}) &= \frac12 \sum_{i \in S} (\sqrt{\mathcal{Q}(i)}- \sqrt{\mathcal{P}(i)})^2\\
		&= \frac12 \sum_{i \in S}\left(\mathcal{Q}(i) +\mathcal{P}(i) -2\sqrt{\mathcal{P}(i)\mathcal{Q}(i)}\right)\\
		&= \frac{\mathcal{P}(S)+\mathcal{Q}(S)}{2} - \sum_{i \in S}\sqrt{\mathcal{P}(i)\mathcal{Q}(i)} 
	\end{align*}
	Then using the fact  that $d^2_{H(S)}(\mathcal{P},\mathcal{Q}) \leq d_{TV(S)}(\mathcal{P},\mathcal{Q})$ we see that, $ \sum_{i \in S}\sqrt{\mathcal{P}(i)\mathcal{Q}(i)} \geq  \frac{\mathcal{P}(S)+\mathcal{Q}(S)}{2} - d_{TV(S)}(\mathcal{P},\mathcal{Q})$.	Then we use the Cauchy-Schwarz inequality: 
	\begin{align*}
		\sum_{i \in S} \mathcal{P}(i)\mathcal{Q}(i) \geq  \frac{(\mathcal{P}(S)+\mathcal{Q}(S)-2d_{TV(S)}(\mathcal{P},\mathcal{Q}) )^2}{4|S|} 
	\end{align*}
\end{proof}

\subsection{Proof of Lemma~\ref{lem:boundonexp}}\label{sec:boundonexp}

\boundonexp*
\begin{proof}
	
	Let $\mathcal{P}\mathcal{Q}$ be a distribution constructed from $\mathcal{P}$ and $\mathcal{Q}$, where we first sample $j \sim B_\mathcal{Q}$ and then sample $i \sim \mathcal{P}_{S_j}$, thus $\mathcal{P}\mathcal{Q}(i) = \underset{j \in [k]\cup \{0\}}{\sum} B_\mathcal{Q}(j)\mathcal{P}_{S_j}(i)$. We know that if  $i \in S_j$, then $i \not \in S_{j'}$ for $j'\not=j$. This allows us to simplify and write $\mathcal{P}\mathcal{Q}(i) = B_\mathcal{Q}(j)\mathcal{P}_{S_j}(i)$.	Then,
	\begin{align*}
		d_{TV}(B_\mathcal{P},B_\mathcal{Q}) &= \frac{1}{2}\sum_{j \in [k] \cup \{0\}} |B_\mathcal{P}(j)-B_\mathcal{Q}(j)|\\
		&= \frac{1}{2}\sum_{j \in [k] \cup \{0\}}\sum_{i \in S_j}\mathcal{P}_{S_j}(i) |B_\mathcal{P}(j)-B_\mathcal{Q}(j)|\\
		&=\frac{1}{2} \sum_{j \in [k] \cup \{0\}} \sum_{i \in S_j}|\mathcal{P}(i)-\mathcal{P}\mathcal{Q}(i)|\\
		&=\frac{1}{2} \sum_{i \in \bn} |\mathcal{P}(i)-\mathcal{PQ}(j)| = d_{TV}(\mathcal{P},\mathcal{PQ})
	\end{align*} 
	Since 	$d_{TV}(B_\mathcal{P},B_\mathcal{Q}) < \varepsilon_2 $, we have $d_{TV}(\mathcal{P},\mathcal{PQ}) < \varepsilon_2$.

	From the definition of TV, we have  
	\begin{align*}
		d_{TV}(\mathcal{Q},\mathcal{PQ}) =& \frac{1}{2}\sum_{i \in \bn} |\mathcal{Q}(i) - \mathcal{P}\mathcal{Q}(i)| \\
		=& \frac{1}{2}\sum_{j \in [k] \cup \{0\}} \sum_{i \in S_j}  |\mathcal{Q}(i) - \mathcal{P}\mathcal{Q}(i)| \\
		=& \frac{1}{2}\sum_{j \in [k] \cup \{0\}} \sum_{i \in S_j}  |B_\mathcal{Q}(j)\mathcal{Q}_{S_j}(i) - B_\mathcal{Q}(j)\mathcal{P}_{S_j}(i)| \\
		=& \frac{1}{2}\sum_{j \in [k] \cup \{0\}} B_\mathcal{Q}(j)\sum_{i \in S_j}  |\mathcal{Q}_{S_j}(i) - \mathcal{P}_{S_j}(i)| \\
		=& \sum_{j \in ([k] \cup \{0\})} B_\mathcal{Q}(j)d_{TV}(\mathcal{P}_{S_j},\mathcal{Q}_{S_j}) \\ 
		=& \sum_{j \in ([k] \cup \{0\})\setminus Bad} B_\mathcal{Q}(j)d_{TV}(\mathcal{P}_{S_j},\mathcal{Q}_{S_j}) + \sum_{j \in Bad} B_\mathcal{Q}(j)d_{TV}(\mathcal{P}_{S_j},\mathcal{Q}_{S_j})
	\end{align*}

	We will need the following sets: $R_1 = \{j: B_\mathcal{P}(j)>2B_\mathcal{Q}(j) \}$,  $R_2 = \{j: B_\mathcal{Q}(j)>5B_\mathcal{P}(j)\}$.
	
	From the triangle inequality we have $d_{TV}(\mathcal{P},\mathcal{Q}) \leq d_{TV}(P,PQ) + d_{TV}(PQ,Q)$. We also know that $d_{TV}(P,PQ)< \varepsilon_2$ and  $d_{TV}(\mathcal{P},\mathcal{Q})> \eta$. Thus we have:
	\begin{align*}
		 \eta - \varepsilon_2 &< d_{TV}(Q,PQ)\\
         \eta - \varepsilon_2 &<		\sum_{j \in ([k] \cup \{0\})\setminus Bad} B_\mathcal{Q}(j)d_{TV}(\mathcal{P}_{S_j},\mathcal{Q}_{S_j}) + 	\sum_{j \in Bad} B_\mathcal{Q}(j)d_{TV}(\mathcal{P}_{S_j},\mathcal{Q}_{S_j})  \\
 \eta - \varepsilon_2 &<			\sum_{j \in \{0\}\cup R_1 \cup R_2}B_\mathcal{Q}(j)d_{TV}(\mathcal{P}_{S_j},\mathcal{Q}_{S_j}) +	 \sum_{j \in [k] \setminus \{R_1\cup R_2 \cup Bad\}}  B_\mathcal{Q}(j)d_{TV}(\mathcal{P}_{S_j},\mathcal{Q}_{S_j})   \\
 &+ 	\sum_{j \in Bad} B_\mathcal{Q}(j)d_{TV}(\mathcal{P}_{S_j},\mathcal{Q}_{S_j})
		\end{align*}
	By definition,  if $j \in [k]\setminus \{Bad\cup R_1\cup R_2\}$, then $j $ has the property that $d_{TV}(\mathcal{P}_{S_j},\mathcal{Q}_{S_j}) \leq \varepsilon_1$. Then, 
	\begin{align}
 \eta - \varepsilon_2 &<	\sum_{j \in \{0\}\cup R_1 \cup R_2}B_\mathcal{Q}(j) +\sum_{j \in [k] \setminus \{R_1\cup R_2 \cup Bad\}} B_\mathcal{Q}(j)\varepsilon_1+ \sum_{j \in Bad} B_\mathcal{Q}(j)\nonumber\\
 \eta - \varepsilon_2 - \varepsilon_1&<	B_\mathcal{Q}(\{0\}\cup R_1\cup R_2)+	 B_\mathcal{Q}(Bad) \nonumber\\
		\eta - \varepsilon_2 - \varepsilon_1 - B_\mathcal{Q}(\{0\}\cup R_1\cup R_2)&< B_\mathcal{Q}(Bad) \label{line:boundonbadQ}
	\end{align} 

 If $i \in R_1$, then $B_\mathcal{P}(i)>2B_\mathcal{Q}(i)$, and thus $B_\mathcal{P}(i)-B_\mathcal{Q}(i)>B_\mathcal{Q}(i)$. And thus, 
 \begin{align}
 B_\mathcal{Q}(R_1) <    \sum_{i \in R_1}(B_\mathcal{P}(i)-B_\mathcal{Q}(i))\label{line:qr1}
\end{align} 
 
 And if $i \in R_2$, then $B_\mathcal{Q}(i)>5B_\mathcal{P}(i)$, and thus $B_\mathcal{Q}(i)-B_\mathcal{P}(i)>4B_\mathcal{P}(i)$, giving 
  \begin{align}
 	B_\mathcal{P}(R_2) &<\frac{1}{4}\sum_{i \in R_2}(B_\mathcal{Q}(i)-B_\mathcal{P}(i))\nonumber \\
 	 	B_\mathcal{P}(R_2) +\sum_{i \in R_2}(B_\mathcal{Q}(i)-B_\mathcal{P}(i))&<\frac{5}{4}\sum_{i \in R_2}(B_\mathcal{Q}(i)-B_\mathcal{P}(i))\nonumber \\
 	 	B_\mathcal{Q}(R_2) &<\frac{5}{4}\sum_{i \in R_2}(B_\mathcal{Q}(i)-B_\mathcal{P}(i)) \label{line:pr2}
 \end{align} 

 Since $|S_0|\leq 2^n$ and all elements $i \in S_0$ satisfy $\mathcal{P}(i) \leq 2^{-k}$ , we have $B_\mathcal{P}(0) \leq 2^{n-k}$, where we substitute $k = n + \log_{2} (100/\eta) $ to get 
 \begin{align}
 	B_\mathcal{P}(0) \leq \frac{\eta}{100} \label{line:p0}
 \end{align} 
 
Then, 
\begin{align}
		B_\mathcal{Q}(\{0\} \cup R_1 \cup R_2)  &= B_\mathcal{Q}(\{0\})+ B_\mathcal{Q}(R_1)+ B_\mathcal{Q}( R_2) \nonumber \\ 
\text{Using (\ref{line:qr1}),( \ref{line:pr2})) and  (\ref{line:p0}) }\quad	  	&\leq \frac{\eta}{100}+ \sum_{i \in \{0\}}(B_\mathcal{Q}(i)-B_\mathcal{P}(i))+ \sum_{i \in R_1}(B_\mathcal{P}(i)-B_\mathcal{Q}(i))+ \frac{5}{4}\sum_{i \in R_2}(B_\mathcal{Q}(i)-B_\mathcal{P}(i)) \label{line:q0r1r2}
\end{align}

Here we partition the set $Bad \cup \{0\}$ into two sets $Bad^+$ and $Bad^-$, where $Bad^+ = \{i\in Bad \cup \{0\}| B_\mathcal{P}(i)\geq B_\mathcal{Q}(i)\}$ and similarly $Bad^- = \{i\in Bad \cup \{0\}| B_\mathcal{P}(i)<B_\mathcal{Q}(i)\}$.  

 \begin{align*}
 	B_\mathcal{Q}(Bad) + B_\mathcal{P}(Bad)& - 2d_{TV(Bad)}(B_\mathcal{Q},B_\mathcal{P})\\
 	& \geq   2(B_\mathcal{Q}(Bad) - 2d_{TV(Bad)}(B_\mathcal{P},B_\mathcal{Q}))\\
\text{(From~\ref{line:boundonbadQ})}\;\; &> 2\left(\eta-\varepsilon_2-\varepsilon_1- 	B_\mathcal{Q}(\{0\} \cup R_1 \cup R_2) - \sum_{i \in Bad}|B_\mathcal{P}(i) - B_\mathcal{Q}(i)|\right)\\
\text{(From~\ref{line:q0r1r2})}\;\; &> 2\left(.99\eta-\varepsilon_2-\varepsilon_1-  \sum_{i \in  R_1 \cup Bad^+}(B_\mathcal{P}(i)-B_\mathcal{Q}(i)) -\frac{5}{4}\sum_{i \in  R_2 \cup Bad^-}(B_\mathcal{Q}(i)-B_\mathcal{P}(i)) \right)
 \end{align*}
 
  And using the fact that $d_{TV}(B_\mathcal{P},B_\mathcal{Q})  = \sum_{i: B_\mathcal{P}(i)\leq B_\mathcal{Q}(i)}(B_\mathcal{Q}(i) - B_\mathcal{P}(i)) = \sum_{i: B_\mathcal{P}(i)> B_\mathcal{Q}(i)}(B_\mathcal{P}(i) - B_\mathcal{Q}(i)) = \varepsilon_2$, and the fact that $\forall_{i \in R_1} B_\mathcal{P}(i)>B_\mathcal{Q}(i)$ and $\forall_{i \in R_2} B_\mathcal{Q}(i)>B_\mathcal{P}(i)$ we have,
	\begin{align*}
		B_\mathcal{Q}(Bad) + B_\mathcal{P}(Bad) - 2d_{TV(Bad)}(B_\mathcal{Q},B_\mathcal{P})  &> 2\left(0.99\eta- \frac{13}{4}\varepsilon_2 - \varepsilon_1\right)
	\end{align*}
\end{proof}

\subsection{Proof of Lemma~\ref{lem:expectationofxsquare}}\label{sec:oneround}
\
Recall that for all $r,s \in [m]$, $\expect[X_{r,s}] =  \sum_{b \in Bad}B_\mathcal{P}(b)B_\mathcal{Q}(b) $. Then since $X = \sum_{r,s \in [m]}X_{r,s}$, $\expect [X] = \sum_{r,s \in [m]}\expect[X_{r,s}] = m^2\expect[X_{r,s}] $.  Then for $i,j,k,l \in [m]$, 
\begin{itemize}
	\item if  $i=k,j = l$  then $\expect[X_{i,j}X_{k,l}] = \sum_{b \in Bad}B_\mathcal{P}(b)B_\mathcal{Q}(b) = \expect[X_{r,s}] $
	\item if  $i=k,j\not = l$  then $\expect[X_{i,j}X_{k,l}] =  \sum_{b \in Bad}B_\mathcal{P}(b)B^2_\mathcal{Q}(b) $
	\item if  $i\not =k,j= l$  then $\expect[X_{i,j}X_{k,l}] =  \sum_{b \in Bad}B^2_\mathcal{P}(b)B_\mathcal{Q}(b) $
	\item if  $i\not =k,j\not = l$  then $\expect[X_{i,j}X_{k,l}]  = \left( \sum_{b \in Bad}B_\mathcal{P}(b)B_\mathcal{Q}(b)\right)^2 = \expect[X_{r,s}] ^2$
\end{itemize}

\begin{align*}
 \expect[X^2] &=     \expect\left[\sum_{i,j,k,l\in [m]}X_{i,j}X_{k,l}\right] \\
	&= \expect\left[\underset{i,j,k,l\in [m]}{ \sum_{a\not = c,b\not = d }}X_{i,j}X_{k,l}\right]  + \expect \left[\underset{i,j,k,l\in [m]}{ \sum_{a = c,b\not = d }}X_{i,j}X_{k,l}\right]+  \expect\left[\underset{i,j,k,l\in [m]}{ \sum_{a\not = c,b = d }}X_{i,j}X_{k,l}\right]+\expect\left[\underset{i,j,k,l \in [m]}{\sum_{a=c,b=d}}X_{i,j}X_{k,l}\right]  \\
	&= m^2(m-1)^2\expect[X_{r,s}] ^2 + m^2(m-1) \left(\sum_{b \in Bad}(B_\mathcal{P}(b)+B_\mathcal{Q}(b)) B_\mathcal{P}(i)B_\mathcal{Q}(i) \right)+m^2\expect[X_{r,s}]  \\
		&\leq m^4\expect[X_{r,s}] ^2 + m^3 \left(\sum_{b \in Bad}(B_\mathcal{P}(b)+B_\mathcal{Q}(b)) B_\mathcal{P}(b)B_\mathcal{Q}(b) \right)+m^2\expect[X_{r,s}] 
\end{align*}

Then, 
\begin{align*}
\frac{\expect[X]^2}{\expect[X^2]}&> \frac{ m^4\expect[X_{r,s}] ^2}{m^4\expect[X_{r,s}] ^2 + m^3 \left(\sum_{b \in Bad}(B_\mathcal{P}(b)+B_\mathcal{Q}(i)) B_\mathcal{P}(i)B_\mathcal{Q}(b) \right)+m^2\expect[X_{r,s}] }\\
      &= \frac{1}{1+ m^{-1} \left(\frac{\sum_{b \in Bad}(B_\mathcal{P}(b)+B_\mathcal{Q}(b)) B_\mathcal{P}(b)B_\mathcal{Q}(b)}{\left(\sum_{b \in Bad}B_\mathcal{P}(b)B_\mathcal{Q}(b)\right)^2} \right)+m^{-2}\expect[X_{r,s}]^{-1}} 
\end{align*}  

We will now focus on finding the maximum for ratio of summations: 

\begin{align*}
\frac{\sum_{b \in Bad}(B_\mathcal{P}(b)+B_\mathcal{Q}(b)) B_\mathcal{P}(b)B_\mathcal{Q}(b)}{\left(\sum_{b \in Bad}B_\mathcal{P}(b)B_\mathcal{Q}(b)\right)^2} &=\frac{\sum_{b \in Bad}
	\left(\sqrt\frac{B_\mathcal{P}(b)}{B_\mathcal{Q}(b)}+\sqrt\frac{B_\mathcal{Q}(b)}{B_\mathcal{P}(b)}\right) (B_\mathcal{P}(b)B_\mathcal{Q}(b))^{3/2}}{\left(\sum_{b \in Bad}B_\mathcal{P}(b)B_\mathcal{Q}(b)\right)^2} \\
\text{(If $b \in Bad$ then $B_\mathcal{P}(b)/B_\mathcal{Q}(b) \in [5^{-1},2]$)}\quad & \leq \frac{(\sqrt{1/5}+\sqrt{5})\underset{b \in Bad}{\sum}
	(B_\mathcal{P}(b)B_\mathcal{Q}(b))^{3/2}}{\left(\sum_{b \in Bad}B_\mathcal{P}(b)B_\mathcal{Q}(b)\right)^2}  \\
 \text{(Using the monotonicity of $\ell_p$ norms)} \quad &<\frac{3}{\left(\sum_{b \in Bad}B_\mathcal{P}(b)B_\mathcal{Q}(b)\right)^{1/2}}  = 3\expect[X_{r,s}]^{-1/2} 
\end{align*}

Thus,
\begin{align}
\frac{\expect[X]^2}{\expect[X^2]}&> \frac{1}{1+ 3m^{-1}\expect[X_{r,s}]^{-1/2}+m^{-2}\expect[X_{r,s}]^{-1} }\nonumber\\
& > \frac{1}{5} \qquad \text{(Since $m^2\expect[X_{r,s}]\geq 1$)}\nonumber
\end{align}

The Chebyshev bound from \ref{prop:cheby} tells us that $\Pr[|X - \expect[X]| < \expect[X]]> \expect[X]^2/\expect[X^2] > \frac{1}{5}$. Thus, 
\begin{align*}
	\Pr[ X  >  0]&>\frac{1}{5}\\
\Pr[ X  \geq  1]&>\frac{1}{5} \qquad \text{(Since $X$ takes only integer values)}
\end{align*}

\subsection{Proof of Proposition ~\ref{prop:thresholds}}\label{sec:thresholds}
\thresholds*
\begin{proof}
	
	If  $d_{\infty}(\mathcal{P},\mathcal{Q})  < \varepsilon $ then 
	\begin{align*}
		\frac{\mathcal{Q}(p)}{\mathcal{Q}(p) + \mathcal{Q}(q)}  &\geq \frac{\mathcal{P}(p)(1-\varepsilon)}{\mathcal{P}(p)(1-\varepsilon) + (1+\varepsilon)\mathcal{P}(q)}\\
		&= \frac{\mathcal{P}(p)}{\mathcal{P}(p) +(1+\frac{2\varepsilon}{1-\varepsilon})\mathcal{P}(q)}
	\end{align*}
	and hence we show the first part of the claim.
	
	For the second part of the proof we introduce the some sets. Let $H_0 = \{ h |1 \leq \frac{\mathcal{Q}(h)}{\mathcal{P}(h)} < 1+\alpha\}$ and $H_1 = \{ h |1+\alpha \leq \frac{\mathcal{Q}(h)}{\mathcal{P}(h)} \}$ and $H = H_0 \cup H_1$. Similarly define,  
	$L_0 = \{ \ell |1 -\alpha< \frac{\mathcal{Q}(\ell)}{\mathcal{P}(\ell)} < 1\}$, $L_1 = \{ \ell | \frac{\mathcal{Q}(\ell)}{\mathcal{P}(\ell)} \leq  1 -\alpha\}$ and $L = L_0 \cup L_1 $. 
	
	Now consider that we have a pair of samples, $p \sim \mathcal{P}$ and $q \sim \mathcal{Q}$. We know that either $\mathcal{P}(L) \geq 1/2$ or $\mathcal{P}(H) > 1/2$. 
	
	\paragraph{$\mathcal{P}(L) \geq 1/2$:} We see that $\Pr[p \in  L ] \geq 1/2$. Then from the definition of $h_0$, $\mathcal{Q}(h_0) - \mathcal{P}(h_0) < \alpha$  and recall that $\mathcal{Q}(H)-\mathcal{P}(H) = d_{TV}(\mathcal{P},\mathcal{Q})$. Thus we have that $\mathcal{Q}(H_1)-\mathcal{P}(H_1 )  >  d_{TV}(\mathcal{P},\mathcal{Q}) - \alpha$ and hence  $\Pr[q \in H_{1} ] >  d_{TV}(\mathcal{P},\mathcal{Q}) - \alpha$. We can now confirm that $q \in H_{1}\wedge  p \in L$ with probability at least $ (d_{TV}(\mathcal{P},\mathcal{Q}) - \alpha)/2$. Then,
	\begin{align*}
		\frac{\mathcal{Q}(p)}{\mathcal{Q}(p) + \mathcal{Q}(q)}  < \frac{\mathcal{P}(p)}{\mathcal{P}(p) + \mathcal{Q}(q)}  \quad\text{(From  $\mathcal{P}(p)>\mathcal{Q}(p)$)}\\
		< \frac{\mathcal{P}(p)}{\mathcal{P}(p) + (1+\alpha)\mathcal{P}(q)}\quad 	\text{(Since $q \in H_1$)}
	\end{align*}
	
	\paragraph{$\mathcal{P}(H) > 1/2$:} We see that $\Pr[q \in  H ] \geq 1/2$.   Then we have that $ \mathcal{P}(L_0) - \mathcal{Q}(L_0) < \alpha $  and also that $\mathcal{P}(L)-\mathcal{Q}(L) = d_{TV}(\mathcal{P},\mathcal{Q})$,  we have that $\mathcal{P}(L_1)-\mathcal{Q}(L_1)  < d_{TV}(\mathcal{P},\mathcal{Q}) - \alpha$. Then, we deduce that probability at least $ (d_{TV}(\mathcal{P},\mathcal{Q}) - \alpha)/2$, $q \in H\wedge  p \in L_1$. Then, 
	\begin{align*}
		&\frac{\mathcal{Q}(p)}{\mathcal{Q}(p) + \mathcal{Q}(q)}  < \frac{\mathcal{Q}(p)}{\mathcal{Q}(p) + \mathcal{P}(q)}  \quad\text{(From  $\mathcal{P}(q)<\mathcal{Q}(q)$)}\\
		&< \frac{\mathcal{P}(p)(1-\alpha)}{\mathcal{P}(p)(1-\alpha) + \mathcal{P}(q)}\quad 	\text{(Since $p\in L_1$)}\\
		&< \frac{\mathcal{P}(p)}{\mathcal{P}(p) + (1+\alpha)\mathcal{P}(q)}	
	\end{align*}

\end{proof}

%
%

\subsection{The outbucket subroutine}\label{sec:algoofoutsidebucket}
\begin{algorithm}[h]
	\caption{$\outsidebucket(B_\mathcal{P},B_\mathcal{Q},k,\theta,\delta)$}\label{alg:learn}
	\begin{algorithmic}[1]
		\State Sample $\max\left(\frac{4(k+1)}{\theta^2},\frac{8\ln(4/\delta)}{\theta^2}\right)$ times from $B_\mathcal{P}$ and $B_\mathcal{Q}$ and construct  empirical distributions $\widehat{B_\mathcal{P}}$ and $\widehat{B_\mathcal{Q}}$.
		\State {\textbf{Return} {$d_{TV}(\widehat{B_\mathcal{P}}, \widehat{B_\mathcal{Q}}) $}}
	\end{algorithmic}
\end{algorithm}

\newpage
\section{Complete Data}\label{sec:missingdata}

\begin{longtable}{cccccccc}\\ \toprule && \multicolumn{3}{c}{$\barbariktwo$}&  \multicolumn{3}{c}{$\barbarikthree$}  \\
    	\cmidrule(l){3-5}\cmidrule(l){6-8}
    	Benchmark&Dimensions & Result &\# of samples & Time(in s) & Result &\# of samples &Time(in s) \\
		\midrule
SetTest.sk\_9\_21 & 21 & R & 2817 & 170 & R & 58000 & 2290\\
s27\_15\_7 & 7 & R & 4789 & 0.99 & R & 30000 & 6.14\\
s27\_7\_4 & 7 & R & 4789 & 1.06 & R & 30000 & 6.43\\
polynomial.sk\_7\_25 & 25 & R & 4789 & 8.41 & R & 66000 & 95.0\\
Pollard.sk\_1\_10 & 10 & R & 7606 & 168 & R & 36000 & 525\\
s298\_3\_2 & 17 & R & 57431 & 50.75 & R & 50000 & 61.53\\
s27\_3\_2 & 7 & R & 62220 & 10.75 & R & 30000 & 6.49\\
s27\_new\_15\_7 & 7 & R & 128264 & 19.04 & R & 30000 & 11.79\\
s526a\_3\_2 & 24 & R & 848148 & 1373 & R & 64000 & 191\\
s444\_3\_2 & 24 & R & 848148 & 1161 & R & 64000 & 142\\
s27\_new\_3\_2 & 7 & R & 905579 & 138 & R & 30000 & 7.16\\
s510\_15\_7 & 25 & R & 12708989 & 18844 & R & 66000 & 206\\
s1488\_15\_7 & 14 & R & 12708989 & 38070 & R & 44000 & 198\\
s298\_7\_4 & 17 & R & 12708989 & 10186 & R & 50000 & 63.58\\
s27\_new\_7\_4 & 7 & A & 23997012 & 3558 & R & 30000 & 7.24\\
s298\_15\_7 & 17 & R & 38126967 & 36140 & R & 50000 & 72.77\\
s349\_7\_4 & 24 & TO & - & 0.28 & R & 64000 & 130\\
110.sk\_3\_88 & 88 & TO & - & 2.98 & R & 190000 & 5082\\
s344\_3\_2 & 24 & TO & - & 0.33 & R & 64000 & 127\\
s526\_3\_2 & 24 & TO & - & 0.38 & R & 64000 & 169\\
53.sk\_4\_32 & 32 & TO & - & 0.32 & R & 80000 & 224\\
s420\_7\_4 & 34 & TO & - & 0.31 & R & 83000 & 297\\
10.sk\_1\_46 & 46 & TO & - & 0.4 & R & 107000 & 521\\
17.sk\_3\_45 & 45 & TO & - & 0.9 & R & 105000 & 801\\
s349\_15\_7 & 24 & TO & - & 0.51 & R & 64000 & 161\\
s820a\_7\_4 & 23 & TO & - & 0.28 & R & 62000 & 221\\
s832a\_15\_7 & 23 & TO & - & 0.43 & R & 62000 & 281\\
80.sk\_2\_48 & 48 & TO & - & 0.58 & R & 111000 & 656\\
s344\_15\_7 & 24 & TO & - & 0.87 & R & 64000 & 145\\
81.sk\_5\_51 & 51 & TO & - & 6.71 & R & 117000 & 13505\\
s420\_3\_2 & 34 & TO & - & 0.25 & R & 83000 & 275\\
s420\_new1\_15\_7 & 34 & TO & - & 0.39 & R & 83000 & 335\\
UserServiceImpl.sk\_8\_32 & 32 & TO & - & 0.7 & R & 80000 & 706\\
111.sk\_2\_36 & 36 & TO & - & 0.33 & R & 87000 & 257\\
s349\_3\_2 & 24 & TO & - & 0.3 & R & 64000 & 120\\
s953a\_7\_4 & 45 & TO & - & 0.71 & R & 105000 & 783\\
s444\_7\_4 & 24 & TO & - & 1.9 & R & 64000 & 153\\
77.sk\_3\_44 & 44 & TO & - & 0.82 & R & 103000 & 862\\
51.sk\_4\_38 & 38 & TO & - & 1.04 & R & 91000 & 1689\\
109.sk\_4\_36 & 36 & TO & - & 0.69 & R & 87000 & 843\\
s832a\_7\_4 & 23 & TO & - & 0.3 & R & 62000 & 237\\
s526a\_15\_7 & 24 & TO & - & 86.82 & R & 64000 & 291\\
s420\_new\_7\_4 & 34 & TO & - & 0.27 & R & 83000 & 276\\
s382\_3\_2 & 24 & TO & - & 0.24 & R & 64000 & 142\\
s641\_3\_2 & 54 & TO & - & 19.96 & R & 123000 & 951\\
s420\_new\_3\_2 & 34 & TO & - & 0.26 & R & 83000 & 286\\
LoginService2.sk\_23\_36 & 36 & TO & - & 18.86 & R & 87000 & 6639\\
s832a\_3\_2 & 23 & TO & - & 0.34 & R & 62000 & 233\\
s420\_new\_15\_7 & 34 & TO & - & 0.29 & R & 83000 & 324\\
s420\_new1\_3\_2 & 34 & TO & - & 0.26 & R & 83000 & 277\\
s838\_3\_2 & 66 & TO & - & 0.7 & R & 147000 & 1640\\
70.sk\_3\_40 & 40 & TO & - & 0.4 & R & 95000 & 450\\
s820a\_15\_7 & 23 & TO & - & 0.39 & R & 62000 & 216\\
29.sk\_3\_45 & 45 & TO & - & 3.29 & R & 105000 & 6514\\
19.sk\_3\_48 & 48 & TO & - & 3.51 & R & 111000 & 2259\\
57.sk\_4\_64 & 64 & TO & - & 0.98 & R & 143000 & 1501\\
s444\_15\_7 & 24 & TO & - & 0.48 & R & 64000 & 169\\
s1238a\_3\_2 & 32 & TO & - & 0.92 & R & 80000 & 493\\
s526\_7\_4 & 24 & TO & - & 46.45 & R & 64000 & 216\\
s382\_7\_4 & 24 & TO & - & 0.31 & R & 64000 & 138\\
s1238a\_7\_4 & 32 & TO & - & 1.1 & R & 80000 & 537\\
7.sk\_4\_50 & 50 & TO & - & 0.68 & R & 115000 & 834\\
55.sk\_3\_46 & 46 & TO & - & 0.44 & R & 107000 & 512\\
s713\_7\_4 & 54 & TO & - & 198 & R & 123000 & 1288\\
s420\_new1\_7\_4 & 34 & TO & - & 0.35 & R & 83000 & 290\\
s641\_7\_4 & 54 & TO & - & 240 & R & 123000 & 1300\\
s1196a\_15\_7 & 32 & TO & - & 2.29 & R & 80000 & 613\\
ProjectService3.sk\_12\_55 & 55 & TO & - & 184 & R & 125000 & 5557\\
s1196a\_3\_2 & 32 & TO & - & 0.91 & R & 80000 & 490\\
s1238a\_15\_7 & 32 & TO & - & 2.39 & R & 80000 & 664\\
s526\_15\_7 & 24 & TO & - & 122 & R & 64000 & 318\\
s820a\_3\_2 & 23 & TO & - & 0.32 & R & 62000 & 188\\
27.sk\_3\_32 & 32 & TO & - & 0.24 & R & 80000 & 196\\
s510\_3\_2 & 25 & TO & - & 0.28 & R & 66000 & 176\\
s1196a\_7\_4 & 32 & TO & - & 1.15 & R & 80000 & 546\\
s344\_7\_4 & 24 & TO & - & 0.32 & R & 64000 & 130\\
s713\_3\_2 & 54 & TO & - & 44.5 & R & 123000 & 1027\\
s953a\_3\_2 & 45 & TO & - & 0.65 & R & 105000 & 731\\
s526a\_7\_4 & 24 & TO & - & 44.16 & R & 64000 & 224\\
s420\_15\_7 & 34 & TO & - & 0.39 & R & 83000 & 346\\
s953a\_15\_7 & 45 & TO & - & 0.95 & R & 105000 & 912\\
s838\_7\_4 & 66 & TO & - & 0.81 & R & 147000 & 1552\\
56.sk\_6\_38 & 38 & TO & - & 0.52 & R & 91000 & 526\\
32.sk\_4\_38 & 38 & TO & - & 0.35 & R & 91000 & 358\\
s382\_15\_7 & 24 & TO & - & 8.83 & R & 64000 & 190\\
s1488\_3\_2 & 14 & TO & - & 0.42 & R & 44000 & 154\\
63.sk\_3\_64 & 64 & TO & - & 3.33 & R & 143000 & 9191\\
		\bottomrule
\caption{Performance of $\barbarikthree$. We experiment with 87 benchmarks, and out of the 87 benchmarks. In the table `TO' represents that either the tester timed out or asked for more than $10^8$ samples. The value of the parameter for closeness  is  $\varepsilon = 0.05$, for farness is $\eta = 0.9$ and for confidence is $\delta = 0.2$.
}\label{tab:long_comparision}
\end{longtable}
\raggedbottom

\end{document}